\numberwithin{equation}{section}
\newtheorem {theorem} {Theorem}[section]
\newtheorem {proposition} [theorem]{Proposition}
\theoremstyle{remark}
\theoremstyle{definition}
\newtheorem {remark} [theorem]{Remark}
\begin{document}
\setlength{\parindent}{4ex}
\setlength{\parskip}{1ex}
\setlength{\oddsidemargin}{12mm}
\setlength{\evensidemargin}{9mm}

\title
{  Bifurcation analysis and global dynamics  of a mathematical model of  antibiotic resistance in hospitals}
\begin{figure}[b]
\rule[-0.5ex]{7cm}{0.2pt}\\
\\E-mail address: cenxiuli2010@163.com (X. Cen), zfeng@math.purdue.edu (Z. Feng), zheng30@math.purdue.edu (Y. Zheng), mcszyl@mail.sysu.edu.cn (Y. Zhao).\\
Supported by the NSF of China (No.11171355 and No.11401111) and the State Scholarship Fund of CSC (No. 201208440200)
\end{figure}
\author
{ {  Xiuli Cen$^{a}$, Zhilan Feng$^{b}$, Yiqiang Zheng$^{b}$ and Yulin Zhao$^{c}$}\\
{\footnotesize\it $^{a}$Department of Mathematical Sciences, Tsinghua University, Beijing, 100084, P.R.China}\\
{\footnotesize\it $^{b}$Department of Mathematics, Purdue University, West Lafayette, IN 47907, USA}\\
{\footnotesize\it $^{c}$Department of Mathematics, Sun Yat-sen University, Guangzhou, 510275, P.R.China}}

\date{}
\maketitle {\narrower \small \noindent {\bf Abstract\,\,\,}
Antibiotic-resistant bacteria has posed a grave threat to public health by causing a number of nosocomial infections in hospitals. Mathematical models have been used to study the transmission of antibiotic-resistant bacteria within a hospital and the measures to control antibiotic resistance in nosocomial pathogens. Studies presented in \cite{LBL,LB} have shown great value in understanding the transmission of antibiotic-resistant bacteria in a hospital. However, their results are limited to numerical simulations of a few different scenarios without analytical analysis of the models in all biologically feasible parameter regions. Bifurcation analysis and identification of the global stability conditions are necessary to assess the interventions which are proposed to limit nosocomial infection and stem the spread of antibiotic-resistant bacteria. In this paper we study the global dynamics of the mathematical model of antibiotic resistance in hospitals in \cite{LBL,LB}. The invasion reproduction number $\mathcal R_{ar}$ of antibiotic-resistant bacteria is introduced. We give the relationship of $\mathcal R_{ar}$ and two control reproduction numbers of sensitive bacteria and resistant bacteria ($\mathcal R_{sc}$ and $\mathcal R_{rc}$). More importantly, we prove that a backward bifurcation may occur at $\mathcal R_{ar}=1$ when the model includes superinfection which is not mentioned in \cite{LB}. That is, there exists a new threshold $\mathcal R_{ar}^c$, and if $\mathcal R_{ar}^c<\mathcal R_{ar}<1$, then the system can have two interior equilibria and it supports an interesting bistable phenomenon. This provides critical information on controlling the antibiotic-resistance in a hospital. }

\vskip 0.2cm

{\it Keywords}:  Antibiotic resistance; Invasion reproduction number; Backward bifurcation; Bistable phenomenon; Global dynamics.

\section{Introduction}

Antimicrobial resistance brings a huge threat to
the effective prevention and treatment of an ever-increasing range
of infections caused by antibiotic-resistant bacteria. In general, patients with infections are
at higher risk of worse clinical outcomes and even death,
and they also consume more healthcare resources. It is one of the preeminent public health
concerns in the 21st century. Mathematical models have made substantial contributions to explain
antibiotic resistance in hospitals as they provide quantitative criteria to
evaluate the interventions to control nosocomial
infection and stem the spread of antibiotic-resistant bacteria.

There has been a substantial amount of work devoted to understanding
the dynamics of antibiotic resistance in hospitals, see for instance
\cite{BBD,BLL,C,LBL,LB,LHL,MO,WAMR}. Lipsitch et al. \cite{LBL} proposed a mathematical model of the transmission
dynamics of antibiotic-resistant and sensitive strains of a type of bacteria in a hospital
or a unit of hospital. They studied the transmission dynamics of
these two strains and the use of two antimicrobial agents referred to
drug 1 and drug 2, and assumed that the bacteria may be sensitive or resistance to drug 1, but all bacteria were sensitive to drug 2. The authors
predicted the transmission dynamics of resistant
and sensitive nosocomial pathogens, and suggested criteria
for measuring the effectiveness of interventions to reduce resistance in
hospitals based on an ODEs model. With numerical simulations, they predicted
how the prevalence of resistant bacteria changes over time under
various interventions. Furthermore, the same authors extended the model in \cite{LB} by allowing
superinfection, which is ignored in \cite{LBL}. The results in \cite{LBL,LB} are limited to
numerical analysis. However, analytical studies of the models are necessary to provide a full assessment of
possible interventions.

We provided analytical analysis of the model \cite{LB} in this paper.
The model considers two strains of a single bacterial species with
two antimicrobial agents referred to drug 1 and drug 2. Individuals
may carry strains of these bacteria that are either sensitive ($S$)
or resistant ($R$) to drug 1, or they may be free of these bacteria
($X$); here, $X$, $S$, and $R$ are the frequencies of the different
host states as well as their designations. The schematic diagram Figure \ref{Fig.1} leads to the following system of ordinary differential equations
\cite{LB}:
\begin{equation}
\begin{split}
\dfrac{dS}{dt} & =  m\mu+\beta SX-(\tau_{1}+\tau_{2}+\gamma+\mu)S+\sigma\beta cSR,\\
\dfrac{dR}{dt}  &= \beta(1-c)RX-(\mu+\tau_{2}+\gamma)R-\sigma\beta cSR,\\
\dfrac{dX}{dt} & =  (1-m)\mu+(\tau_{1}+\tau_{2}+\gamma)S+(\tau_{2}+\gamma)R-\beta SX-\beta(1-c)RX -\mu X.
\end{split}\label{model}
\end{equation}
The parameters in the model are explained in Table \ref{table1}.
If $\sigma=0$, then the superinfection is ignored and system \eqref{model} is the model studied in \cite{LBL}.

\begin{figure}[h]
\centering \includegraphics[width=0.65\textwidth]{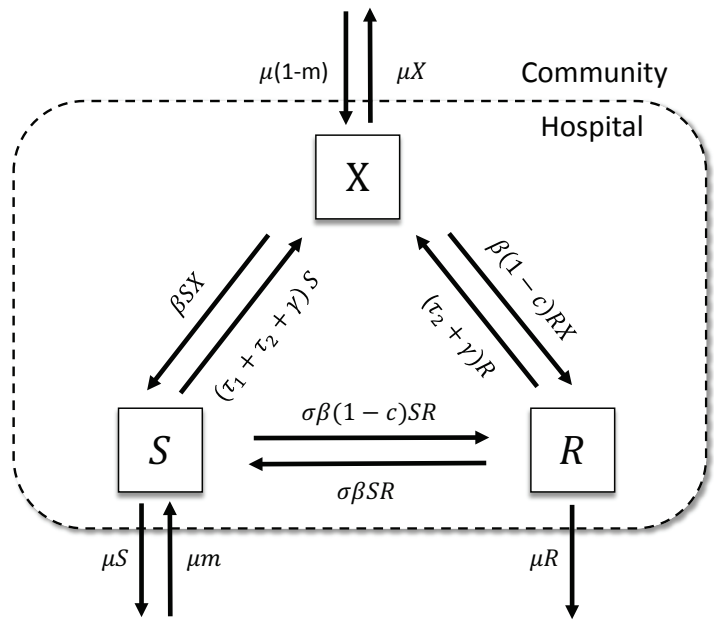} \protect\caption{A compartment model of bacterial transmission dynamics in a hospital
setting. This figure is adopted from \cite{LB}, and we reverse the
arrows from $S$ to $R$. The correction has been confirmed by the
authors of \cite{LB} in the private mail.}
\label{Fig.1}
\end{figure}

\begin{table}[h]
\caption{Description of parameters in the model \eqref{model}}\centering
\doublerulesep=0.4pt
\begin{tabular}{cl}
\hline\hline
\small{Parameters} & \small{Description} \\ \hline
\small{$\mu$} &  \small{Per-capita patient turnover rate, $\mu>0$}   \\
\small{$m$} & \small{Proportion of admitted already colonized with sensitive bacteria, $m\in[0,1]$}\\
\small{$\beta$} & \small{Per-capita primary transmission rate (colonization), $\beta>0$}\\
\small{$\tau_i$} & \small{Per-capita  treatment rate of  drug $i$,  $i=1,2$, $\tau_i\geq0$}\\
\small{$\gamma$} & \small{Per-capita  clearance rate of bacteria due to immune response, $\gamma>0$}\\
\small{$\sigma$} & \small{Relative rate of secondary colonization to that of the primary colonization, $\sigma\in[0,1]$ }\\
\small{$c$} & \small{Fitness ``cost" of a bacterial strain resistant to drug 1, $c\in[0,1)$}\\
\hline\hline
\end{tabular}
\label{table1}
\end{table}

Note that $S+R+X=1$. The model \eqref{model} can be simplified to an equivalent
planar differential system \eqref{plane} by substituting $X=1-S-R$ into the first
two equations of system \eqref{model},
\begin{equation}
\begin{split}
\dfrac{dS}{dt} & = m\mu+\beta S(1-S-R)-(\tau_{1}+\tau_{2}+\gamma+\mu)S+\sigma\beta cSR=U(S,R),\\
\dfrac{dR}{dt} & = R[\beta(1-c)(1-S-R)-(\tau_{2}+\gamma+\mu)-\sigma\beta cS]=V(S,R).
\end{split}\label{plane}
\end{equation}

In this paper we analytically studied system \eqref{plane} instead of system \eqref{model}. We obtained the invasion reproduction number $\mathcal{R}_{ar}$ of antibiotic-resistant bacteria and introduced the connection between $\mathcal{R}_{ar}$ and $\mathcal{R}_{sc}$ and $\mathcal{R}_{rc}$, which are the control reproduction numbers of sensitive bacteria and resistant bacteria. It is of great value in measuring the effectiveness of interventions to reduce resistance in hospitals. One of important findings in our paper is that backward bifurcation may occur when superinfection is included which is impossible in the model without superinfection.
When the backward bifurcation arises, it supports an interesting bistable phenomenon and provides important information on explanation of antibiotic-resistant controls. Moreover, we get the global dynamics for all parameter cases. Numerical simulations help to verify the analytical results obtained. Some measures can be taken to control the transmission of antibiotic-resistant bacteria, for example,
isolation of patients with infections caused by antibiotic-resistant bacteria, properly increased use of drug 2, etc.
Sensitivity analysis of the model based on Latin Hypercube Sampling shows that the treatment of drug 1 ($\tau_1$) has the most important influence on the frequency of patients colonized with antibiotic-resistant bacteria, and they are positively correlated. Thus, it needs to be more cautious to use drug 1.

The rest of this paper is organized as follows.  Section \ref{analysis} is devoted to the studies on
the invasion reproduction number $\mathcal{R}_{ar}$ of antibiotic-resistant
bacteria, the bifurcation analysis and  the global dynamics of the model for different parameter conditions. We give some numerical analysis about the backward bifurcation and dependence on parameters in section \ref{numerical}.  Discussions are provided in section \ref{discussion}. Finally, some detailed proofs and the global dynamics of the model when $m=0$ are given in the Appendix for reference.

\section{Model analysis}\label{analysis}
In what follows we will study system \eqref{plane} to obtain the global dynamics of system \eqref{model}.

Note that $S$ and $R$ are the frequencies of different host states.
It is sufficient to consider the dynamics of system \eqref{plane}
on the following closed region:
\begin{equation}\label{Omega}
\Omega=\{(S,R)|0\leq S\leq1,0\leq R\leq1-S\}.
\end{equation}
Moreover, we have the following proposition on $\Omega$:

\begin{proposition}\label{invar_set}
For system \eqref{plane},  the set $\Omega$, defined in \eqref{Omega}, is positively invariant. That is to say,  all solutions  through a point in $\Omega$ stay in $\Omega$ for all $t\geq 0$.
\end{proposition}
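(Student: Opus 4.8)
The plan is to use the standard boundary-flow (Nagumo-type) criterion. Since $\Omega$ is the closed triangle with vertices $(0,0)$, $(1,0)$, $(0,1)$, its boundary consists of the three edges $S=0$, $R=0$, and the hypotenuse $S+R=1$. To prove positive invariance it suffices to check that at every boundary point the vector field $(U,V)$ is either tangent to the boundary or points into the interior of $\Omega$; once this sub-tangentiality condition holds on each edge, no trajectory can escape through the boundary, and $\Omega$ is positively invariant.

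First I would treat the two coordinate edges. On $\{S=0,\,0\le R\le 1\}$, substituting $S=0$ into $U$ gives $U(0,R)=m\mu$, which is nonnegative because $m\in[0,1]$ and $\mu>0$; hence $\dot S\ge 0$ there and no solution can cross into $S<0$ (and when $m=0$ this edge is itself invariant). On $\{R=0,\,0\le S\le 1\}$, the common factor $R$ in $V$ forces $V(S,0)=0$, so $\{R=0\}$ is an invariant line and cannot be crossed.

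The remaining work is on the hypotenuse $\{S+R=1\}$, where $X=1-S-R=0$ and the outward direction is $(1,1)$, so I must control the sign of $\dot S+\dot R=U+V$. Setting $1-S-R=0$ in both $U$ and $V$, I expect the superinfection contributions $+\sigma\beta cSR$ in $U$ and $-\sigma\beta cSR$ in $V$ to cancel, leaving $U+V=m\mu-(\tau_{1}+\tau_{2}+\gamma+\mu)S-(\tau_{2}+\gamma+\mu)R$. Using $R=1-S$ on this edge, this reduces to $U+V=m\mu-(\tau_{2}+\gamma+\mu)-\tau_{1}S$. Since $m\le 1$ gives $m\mu\le\mu$, while $\gamma>0$ gives $\mu<\mu+\tau_{2}+\gamma$, and $\tau_{1}S\ge 0$, we obtain $U+V<0$; thus $S+R$ is strictly decreasing along the hypotenuse and trajectories move into $\{S+R<1\}$.

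Combining the three edges and invoking the boundary-flow criterion then yields positive invariance of $\Omega$ (the corner points cause no difficulty, since the field on each adjoining edge is already sub-tangential). I expect the only substantive step to be the hypotenuse estimate: the argument hinges on the cancellation of the superinfection coupling $\sigma\beta cSR$ and then on the parameter constraints $m\in[0,1]$ and $\gamma>0$, which together force strict inflow. The two coordinate edges are essentially immediate, as $S=0$ produces the constant inflow $m\mu$ and $R=0$ is invariant.
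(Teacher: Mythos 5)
Your proposal is correct and follows essentially the same route as the paper's proof: checking the vector field on the three edges of $\Omega$, noting $\dot S|_{S=0}=m\mu\ge 0$, that $\{R=0\}$ is invariant, and computing $\frac{d(S+R)}{dt}\big|_{S+R=1}=m\mu-(\tau_2+\gamma+\mu)-\tau_1 S=-(1-m)\mu-\tau_2-\gamma-\tau_1 S<0$, which matches the paper's expression exactly. The only cosmetic difference is that you phrase the conclusion via the Nagumo sub-tangentiality criterion, while the paper argues directly about orbit directions on the boundary.
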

The proof of Proposition \ref{invar_set} is given in the Appendix.

Before the analysis of the model, we introduce two important control reproduction numbers, which is closely related with
the invasion reproduction number and the dynamics of system \eqref{plane}.

In a hypothetical institution where all individuals enter uncolonized ($m=0$), the control reproduction number of
the sensitive bacteria is defined as
\begin{equation}\label{Rsc}
\mathcal{R}_{sc}=\dfrac{\beta}{\tau_1+\tau_2+\gamma+\mu},
\end{equation}
and the control reproduction number of the resistant bacteria is defined as
\begin{equation}\label{Rrc}
\mathcal{R}_{rc}=\dfrac{\beta(1-c)}{\tau_2+\gamma+\mu}.
\end{equation}
They respectively represent the number of secondary infections produced by an infectious individual colonized with the sensitive bacteria
and by an infectious individual colonized with the resistant bacteria during the respective infectious period, when introduced in a population uncolonized.

The global dynamics of system \eqref{plane} when $m>0$ is very different from the case $m=0$. In the present paper, we mainly focus on the case $m>0$ and give the global dynamics of system \eqref{plane} when $m=0$ in the Appendix for reference and comparison.

\subsection{Resistance-free equilibrium and invasion reproduction number}
In this subsection, we study the resistance-free equilibrium of system
\eqref{plane} and get the invasion reproduction number of antibiotic-resistant
bacteria.

It is easy to verify that system \eqref{plane} has a unique boundary
equilibrium, i.e., the resistance-free equilibrium $E_{0}^*(S_{0},0)$,
where $S_{0}$ is the positive root of the equation
\begin{equation}
\beta S^{2}+(\tau_{1}+\tau_{2}+\gamma+\mu-\beta)S-m\mu=0,\label{U0}
\end{equation}
given by
\begin{equation}\begin{split}
S_{0}&=\dfrac{\beta-\tau_{1}-\tau_{2}-\gamma-\mu+\sqrt{(\beta-\tau_{1}-\tau_{2}-\gamma-\mu)^{2}+4\beta m\mu}}{2\beta}\\
&=\dfrac{1}{2}\left[1-\dfrac{1}{\mathcal{R}_{sc}}+\sqrt{\left(1-\dfrac{1}{\mathcal{R}_{sc}}\right)^{2}+\dfrac{4m\mu}{\beta}}\right].\label{S0}
\end{split}\end{equation}
It follows from $m\in(0,1]$ that $S_{0}\in(0,1)$, and thus it is biological feasible. For the case $m>0$, the existence of resistance-free equilibrium does not depend on the scale of $\mathcal{R}_{sc}$, and the disease-free equilibrium no longer exists.

Notice that $S_{0}$ given by \eqref{S0} represents the frequency of individuals
which carry the bacteria sensitive to drug 1 in the absence of antibiotic-resistant infection, while $X_{0}=1-S_{0}$
represents the frequency of individuals who are not colonized with
the sampled species. These two quantities together can be used to describe the invasion reproduction number of antibiotic-resistant
bacteria, which is denoted by $\mathcal{R}_{ar}$ and defined as
\begin{equation}
\mathcal{R}_{ar}= {\displaystyle \frac{\beta(1-c)X_{0}+\sigma\beta(1-c)S_{0}}{\tau_{2}+\gamma+\mu+\sigma\beta S_{0}}}.\label{rate}
\end{equation}
The invasion reproduction number $\mathcal{R}_{ar}$ represents the total number of individuals
colonized with antibiotic-resistant bacteria caused by one ``infected" individual during the entire period of colonization ($1/(\tau_{2}+\gamma+\mu+\sigma\beta S_{0})$) when the antibiotic-resistant bacteria invade into a population where individuals carrying the sensitive bacteria are at the level $S_0$, and individuals not colonized with the sampled species are at the level $X_0$. Note that
the total individuals colonized with antibiotic-resistant bacteria not only include the colonization of ones who are free of the bacterial species of interest, but also the colonization of ones who carry the sensitive bacteria in which case the superinfection occurs.

The invasion reproduction number $\mathcal{R}_{ar}$ also can be written as a function of control reproduction numbers $\mathcal{R}_{rc}$ and $\mathcal{R}_{sc}$ which is implied in the expression of $S_0$:
\begin{equation}\label{Rar}
\mathcal{R}_{ar}= {\displaystyle\frac{(1-c)[1-(1-\sigma)S_{0}]\mathcal{R}_{rc}}{1-c+\sigma S_{0}\mathcal{R}_{rc}}}.
\end{equation}
\begin{remark}\label{Rar&Rrc}
It follows from \eqref{Rar} that $\mathcal{R}_{ar}\geq1\Rightarrow\mathcal{R}_{rc}>1$ and $\mathcal{R}_{rc}\leq1\Rightarrow\mathcal{R}_{ar}<1$.
\end{remark}
The stability of the resistance-free equilibrium $E_{0}^*$ can be determined
by the invasion reproduction number $\mathcal{R}_{ar}$, which is described in the following result.
\begin{theorem}\label{th:E0}
System \eqref{plane} always has a resistance-free equilibrium $E_{0}^*(S_{0},0)$. If $\mathcal{R}_{ar}<1$, then $E_{0}^*$ is a locally asymptotically stable (l.a.s.) node; if $\mathcal{R}_{ar}>1$, then $E_{0}^*$ is an unstable saddle; if $\mathcal{R}_{ar}=1$, then $E_0^*$ is l.a.s. when  $f(\mathcal{R}_{rc})\leq0$, and is unstable when $f(\mathcal{R}_{rc})>0$, where
\begin{equation}\label{f}
f(\mathcal{R}_{rc})=[c\sqrt{\beta(1-c)\sigma(1-\sigma)}-(1-c+\sigma c)\sqrt{m\mu}]\mathcal{R}_{rc}
-c\sqrt{\beta(1-c)\sigma(1-\sigma)}.
\end{equation}
\end{theorem}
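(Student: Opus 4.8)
The plan is to linearize system \eqref{plane} at $E_0^*$ and read off the two eigenvalues in the nondegenerate cases, then treat the critical case $\mathcal{R}_{ar}=1$ by a one-dimensional center manifold reduction. The key structural fact driving everything is that the line $R=0$ is invariant (indeed $V(S,0)\equiv 0$), so the Jacobian at $E_0^*$ is triangular.

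First I would compute the Jacobian of $(U,V)$ and evaluate it at $E_0^*(S_0,0)$. Because $V(S,0)\equiv 0$, the entry $\partial V/\partial S$ vanishes there, so $J(E_0^*)$ is upper triangular and its eigenvalues are its diagonal entries; in particular both are real, which already justifies the word ``node'' rather than ``focus.'' Using the defining relation \eqref{U0} (equivalently $U(S_0,0)=0$) to eliminate $m\mu$, the first eigenvalue simplifies to
\[
\lambda_1=\frac{\partial U}{\partial S}(S_0,0)=-\beta S_0-\frac{m\mu}{S_0}<0,
\]
which is negative for all feasible parameters. The second eigenvalue is $\lambda_2=\frac{\partial V}{\partial R}(S_0,0)=\beta(1-c)(1-S_0)-(\tau_2+\gamma+\mu)-\sigma\beta c S_0$, and a short computation using $X_0=1-S_0$ together with the definition \eqref{rate} gives the identity
\[
\lambda_2=(\tau_2+\gamma+\mu+\sigma\beta S_0)\,(\mathcal{R}_{ar}-1),
\]
so $\mathrm{sign}\,\lambda_2=\mathrm{sign}(\mathcal{R}_{ar}-1)$. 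This settles the two nondegenerate cases at once: if $\mathcal{R}_{ar}<1$ then $\lambda_1,\lambda_2<0$ and $E_0^*$ is a locally asymptotically stable node, while if $\mathcal{R}_{ar}>1$ then $\lambda_1<0<\lambda_2$ and $E_0^*$ is a saddle.

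The substantive part is the critical case $\mathcal{R}_{ar}=1$, where $\lambda_2=0$ and linearization is inconclusive. Here I would invoke a center manifold. Since $\lambda_1<0$, the $S$-direction is strongly stable, trajectories collapse onto a center manifold that can be parametrized by $R$, and I write $S=S_0+a_1R+a_2R^2+\cdots$, with the slope $a_1$ fixed by the $\lambda_2=0$ eigenvector (equivalently by the tangency condition $U=\frac{dS}{dR}\,V$). Substituting into $\dot R=R\,g(S,R)$ with $g(S,R)=\beta(1-c)(1-S-R)-(\tau_2+\gamma+\mu)-\sigma\beta c S$, and using that the constant term $g(S_0,0)=\lambda_2$ is killed by $\mathcal{R}_{ar}=1$, the reduced flow takes the form $\dot R=b_2R^2+O(R^3)$. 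Within the biologically feasible region $R\ge 0$, stability of $E_0^*$ is then governed by the sign of $b_2$: $b_2<0$ gives local asymptotic stability and $b_2>0$ gives instability.

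The hard part will be the center manifold algebra: to show that $b_2$ equals a manifestly positive quantity times $f(\mathcal{R}_{rc})$. I expect to use the constraint $\lambda_2=0$ (equivalently $\mathcal{R}_{ar}=1$ in the form \eqref{Rar}) to solve $S_0=\tfrac{(1-c)(\mathcal{R}_{rc}-1)}{(1-c+\sigma c)\mathcal{R}_{rc}}$ explicitly, and to eliminate $m\mu$ through $\lambda_1=-(\beta S_0^2+m\mu)/S_0$; the square-root terms $\sqrt{m\mu}$ and $\sqrt{\beta(1-c)\sigma(1-\sigma)}$ in \eqref{f} should emerge as the factorization (completing the square / discriminant) of the resulting quadratic expression for $b_2$. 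Two points demand care. First, the sign bookkeeping, so that ``$b_2\le 0$'' matches ``$f(\mathcal{R}_{rc})\le 0$'' exactly and the residual factor is indeed positive throughout the feasible parameter range. Second, the boundary subcase $f(\mathcal{R}_{rc})=0$, where $b_2=0$ and one must push the expansion to the cubic term $b_3R^3$ and verify $b_3<0$; this is precisely what places $f=0$ on the stable side of the stated dichotomy.
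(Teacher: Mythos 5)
Your proposal is correct and follows essentially the same route as the paper: the triangular Jacobian at $E_0^*$ (your $\lambda_1=-\beta S_0-m\mu/S_0$ equals the paper's $-\sqrt{(\beta-\tau_1-\tau_2-\gamma-\mu)^2+4\beta m\mu}$, and your identity $\lambda_2=(\tau_2+\gamma+\mu+\sigma\beta S_0)(\mathcal{R}_{ar}-1)$ is exactly the paper's second eigenvalue), followed by a center manifold reduction at $\mathcal{R}_{ar}=1$ in which the quadratic coefficient is a positive multiple of $f(\mathcal{R}_{rc})$, with the restriction to $R\ge 0$ handling the saddle-node ambiguity and a negative cubic coefficient settling the case $f(\mathcal{R}_{rc})=0$. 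The only difference is presentational: the paper performs an explicit affine normal-form transformation before invoking the center manifold theorem, whereas you parametrize the manifold directly as a graph $S=S_0+a_1R+a_2R^2+\cdots$, which amounts to the same computation.
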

The proof of Theorem \ref{th:E0} can be found in the Appendix.
\begin{remark}
In the case of $\mathcal{R}_{ar}=1$, the condition $f(\mathcal{R}_{rc})>0$ can be replaced by $\beta>\beta^*$ and $\mathcal{R}_{rc}>\mathcal{R}_{rc}^*$, thus, under the new conditions, $E_0^*$ is unstable, otherwise $E_0^*$ is l.a.s., where
\begin{equation}\begin{split}\label{Rrc*}
\beta^*&=\dfrac{m\mu(1-c+\sigma c)^2}{\sigma(1-c)(1-\sigma)c^2},\\
\mathcal{R}_{rc}^*&=\dfrac{c\sqrt{\beta(1-c)\sigma(1-\sigma)}}{c\sqrt{\beta(1-c)\sigma(1-\sigma)}-(1-c+\sigma c)\sqrt{m\mu}}.
\end{split}\end{equation}
\end{remark}
\subsection{Existence and stability analysis of interior equilibria}

This subsection is devoted to the study of the existence and stability of interior
equilibria.

An equilibrium $E^{*}(S^{*},R^{*})$ is called an \textit{interior
equilibrium} of system \eqref{plane} if $E^{*}(S^{*},R^{*})$ is
an interior point of $\Omega$, i.e.,
\begin{equation}
0<S^{*}<1,\,\,0<R^{*}<1-S^{*}.\label{inter}
\end{equation}

\subsubsection{Existence of interior equilibria}

It follows from $V(S^{*},R^{*})=0$ that
\begin{equation}
R^{*}=-\frac{1-c+\sigma c}{1-c}S^{*}+\frac{\mathcal{R}_{rc}-1}{\mathcal{R}_{rc}}.\label{R*}
\end{equation}
Substituting it into $U(S^{*},R^{*})=0$, we get the following equation
satisfied by the abscissa of the interior equilibrium (for which we
drop ``$*$'' for simplicity)
\begin{equation}
\Phi(S)=\phi_{2}S^{2}+\phi_{1}S+\phi_{0}=0,\label{equ:S}
\end{equation}
where
\begin{equation}\label{phi}
\phi_{2}=c^{2}\sigma(1-\sigma),\quad\phi_{1}=(1-c)\left(\dfrac{1-c\sigma}{\mathcal{R}_{rc}}+c\sigma-
\dfrac{1}{\mathcal{R}_{sc}}\right),\quad\phi_{0}=\dfrac{(1-c)m\mu}{\beta}.
\end{equation}

To make sure that $E^{*}(S^{*},R^{*})$ is an interior equilibrium of system \eqref{plane},
by \eqref{inter} and \eqref{R*}, we have the following conditions which must be satisfied:
\begin{equation}\label{a^*}
\mathcal{R}_{rc}>1 \quad \hbox{and} \quad 0<S^{*}<\frac{(1-c)(\mathcal{R}_{rc}-1)}{(1-c+c\sigma)\mathcal{R}_{rc}}=a^{*}<1.
\end{equation}
Note that $a^*<1$ holds naturally and $a^*>0$ if and only if $\mathcal{R}_{rc}>1$. If $\mathcal{R}_{rc}\leq1$, then $\mathcal{R}_{ar}<1$ by Remark \ref{Rar&Rrc} and system \eqref{plane} has no interior equilibrium. Hence, the number of roots of equation \eqref{equ:S} in the interval $(0,a^*)$ corresponds to the number of interior
equilibria of system \eqref{plane}, and it is no more than 2. Although the coefficients of $\Phi(S)$ are a little complex,  we can determine the number of solutions $S^*$ by examining the properties of the function $\Phi(S)$.

A direct computation leads to
\begin{equation}
\Phi(0)=\phi_{0}>\text{0},\quad\Phi(a^{*})=\Phi_{0}(a^{*})(1-\mathcal{R}_{ar}),\label{boundary}
\end{equation}
where
\[
\Phi_{0}(a^{*})=\dfrac{(1-c)(1-c+\sigma S_{0}\mathcal{R}_{rc})}{(1-c+\sigma c)\mathcal{R}_{rc}}\left(a^{*}-\tilde{{S_{0}}}\right)>0,
\]
with $\tilde{S_{0}}$ being the other root of the equation \eqref{U0} and $\tilde{S_{0}}<0$.

If $c\sigma(1-\sigma)=0$, then $\phi_2=0$ and $\Phi(S)$ is linear, and thus
system \eqref{plane} has a unique interior equilibrium if and only if $\mathcal{R}_{ar}>1$ (i.e., $\Phi(a^{*})<0$).

If $c\sigma(1-\sigma)>0$, then $\Phi(S)$ is quadratic and it has at most 2 zeros. (i) When $\mathcal{R}_{ar}>1$, we have $\mathcal{R}_{rc}>1$ and $\Phi(a^{*})<0$.  It follows from $\Phi(0)>0$ that the system has a unique interior equilibrium.
(ii) When $\mathcal{R}_{ar}<1$ (in this case $\mathcal{R}_{rc}>1$ can't be derivated), we suppose $\mathcal{R}_{rc}>1$ and get $\Phi(a^{*})>0$. The system \eqref{plane}
may have 0, 1, or 2 interior equilibria, which is determined by the sign of the
discriminant
\begin{equation}\begin{split}
\Delta&=\phi_{1}^{2}-4\phi_{0}\phi_{2}\\
&=(1-c)^2\left(\dfrac{1-c\sigma}{\mathcal{R}_{rc}}+c\sigma-\dfrac{1}{\mathcal{R}_{sc}}\right)^{2}-4c^{2}\sigma(1-\sigma)(1-c)\dfrac{m\mu}{\beta}\label{discriminant}
\end{split}\end{equation}
and the position of symmetry axis $-\phi_{1}/(2\phi_{2})$ of quadratic polynomial equation \eqref{equ:S}. The detailed
analysis is as follows:

Note that $\Phi(S)$ achieves its minimum at $-\phi_{1}/(2\phi_{2})$. Hence, system \eqref{plane} has two different interior equilibria if and only if
\begin{equation}\label{de0}
\mathcal{R}_{ar}<1,\quad\Delta>0,\quad0<-\frac{\phi_{1}}{2\phi_{2}}<a^{*}.
\end{equation}
If
\begin{equation}\label{de1}
\Delta=0,\quad0<-\frac{\phi_{1}}{2\phi_{2}}<a^{*},
\end{equation}
then system \eqref{plane} has only one interior equilibrium.

(iii) When $\mathcal{R}_{ar}=1$, we have $\mathcal{R}_{rc}>1$ and  $\Phi(a^{*})=0$. It suffices to ensure  $-\phi_{1}/(2\phi_{2})<a^*$ that system \eqref{plane} can have a unique interior equilibrium. This is because that $\Phi(S)$\ assumes its minimum at $-\phi_{1}/(2\phi_{2})$ and it follows from $-\phi_{1}/(2\phi_{2})<a^*$ and $\Phi(a^{*})=0$ that this minimum is less than 0. Combining $\Phi(0)>0$, we have $-\phi_{1}/(2\phi_{2})>0$ and $\Phi(S)$ has exactly one zero in $(0,a^{*})$.

System \eqref{plane} has none interior equilibria for all other cases.

We simplify the existence conditions \eqref{de0} and \eqref{de1}, and summarize the discussions  above as follows:
\begin{theorem}\label{th:S}
Let $\mathcal{R}_{ar}$ be the invasion reproduction number defined in \eqref{rate} or \eqref{Rar}, and $\mathcal{R}_{sc}$, $\mathcal{R}_{rc}$ be the control reproduction numbers defined in \eqref{Rsc} and \eqref{Rrc}, respectively.
\begin{itemize}
\item[(a)]  If $\mathcal{R}_{ar}>1$, then system \eqref{plane} has a unique interior equilibrium;
\item[(b)]  if $c\sigma(1-\sigma)>0$, $\beta>\beta^*$ and $\mathcal{R}_{rc}>\mathcal{R}_{rc}^*$, then there exists
    a new threshold $\mathcal{R}_{ar}^c$ ($<1$), such that
    \begin{itemize}
\item[(i)]  when $\mathcal{R}_{ar}^c<\mathcal{R}_{ar}<1$ ($\mathcal{R}_{sc}^{1}<\mathcal{R}_{sc}<\mathcal{R}_{sc}^{\triangle}$), system \eqref{plane} has two different interior equilibria;
\item[(ii)] when $\mathcal{R}_{ar}=\mathcal{R}_{ar}^c$ ($\mathcal{R}_{sc}=\mathcal{R}_{sc}^{\triangle}$), system \eqref{plane} has a unique interior equilibrium;
\item[(iii)] when $\mathcal{R}_{ar}=1$ ($\mathcal{R}_{sc}=\mathcal{R}_{sc}^{1}$), system \eqref{plane} has a unique interior equilibrium;
    \end{itemize}
\item[(c)]  system \eqref{plane} has none interior equilibria for all other cases,
\end{itemize}
where $\beta^*$ and $\mathcal{R}_{rc}^*$ are given by \eqref{Rrc*}, and
\begin{equation}\label{Rarc}
\mathcal{R}_{ar}^c={\displaystyle \frac{\beta(1-c)(1-(1-\sigma)S_0^{\triangle})}{\tau_{2}+\gamma+\mu+\sigma\beta S_0^{\triangle} }},
\end{equation}
or equivalently,
\begin{equation}\label{Rarc1}
\mathcal{R}_{ar}^c= {\displaystyle\frac{(1-c)[1-(1-\sigma)S_0^{\triangle}]\mathcal{R}_{rc}}{1-c+\sigma S_0^{\triangle}\mathcal{R}_{rc}}},
\end{equation}
with
\begin{equation*}\begin{split}
S_0^{\triangle}=
\dfrac{1}{2}\left[1-\dfrac{1}{\mathcal{R}_{sc}^{\triangle}}+\sqrt{\left(1-\dfrac{1}{\mathcal{R}_{sc}^{\triangle}}\right)^{2}+\dfrac{4 m\mu}{\beta}}\right].\label{S_0^c}
\end{split}\end{equation*}
Here,
\begin{equation}
\begin{split}\label{Rsc1}
\mathcal{R}_{sc}^{\triangle}&=\dfrac{\beta(1-c)\mathcal{R}_{rc}}{2c\mathcal{R}_{rc}
\sqrt{\beta\sigma(1-c)(1-\sigma)m\mu}+\beta(1-c)(1+c\sigma(\mathcal{R}_{rc}-1))},\\
\mathcal{R}_{sc}^{1}&=\dfrac{\beta(1-c)(1-c+c\sigma)\mathcal{R}_{rc}(\mathcal{R}_{rc}-1)}
{m\mu(1-c+c\sigma)^2\mathcal{R}_{rc}^2+\beta(1-c)(\mathcal{R}_{rc}-1)(1-c+c\sigma\mathcal{R}_{rc})}\\
\end{split}
\end{equation}
are derived from the conditions $\Delta=0$ and $\mathcal{R}_{ar}=1$, respectively.
\end{theorem}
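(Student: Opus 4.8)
The plan is to reduce the entire count of interior equilibria to a root count of the quadratic $\Phi$ on the interval $(0,a^*)$, exactly as set up in the paragraphs preceding the statement, and then to pin down the algebraic thresholds. The two facts that drive everything are already in hand: $\Phi(0)=\phi_0>0$ from \eqref{phi}, and $\mathrm{sign}\,\Phi(a^*)=\mathrm{sign}(1-\mathcal{R}_{ar})$ from \eqref{boundary}, since the prefactor $\Phi_0(a^*)$ there is positive. Because each admissible root $S\in(0,a^*)$ yields, via \eqref{R*}, a unique $R\in(0,1-S)$, the number of interior equilibria equals the number of such roots, which is at most two.

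Part (a) and the degenerate subcase are immediate sign-change arguments. If $\mathcal{R}_{ar}>1$ then $\Phi(a^*)<0$, so whether $\phi_2=0$ (affine) or $\phi_2>0$ (upward parabola), $\Phi$ changes sign once on $(0,a^*)$ and has exactly one root there, giving (a). If $c\sigma(1-\sigma)=0$ then $\phi_2=0$ and $\Phi$ is affine with $\Phi(0)>0$, so a root in $(0,a^*)$ exists iff $\Phi(a^*)<0$, i.e. iff $\mathcal{R}_{ar}>1$; thus no interior equilibrium survives when $\mathcal{R}_{ar}\le1$, contributing to (c). Likewise $\mathcal{R}_{rc}\le1$ forces $\mathcal{R}_{ar}<1$ by Remark \ref{Rar&Rrc} and $a^*\le0$ by \eqref{a^*}, leaving no admissible interval, again part of (c).

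The substance is part (b), where $c\sigma(1-\sigma)>0$ makes $\Phi$ a genuine upward parabola. Here $\Phi(0)>0$ and, for $\mathcal{R}_{ar}<1$, $\Phi(a^*)>0$, so two roots lie in $(0,a^*)$ precisely under the conditions \eqref{de0}: $\Delta>0$ together with a vertex $0<-\phi_1/(2\phi_2)<a^*$, while $\Delta=0$ with the same vertex location collapses them to the single root of \eqref{de1}. I would then fix $\beta$ and $\mathcal{R}_{rc}$ and treat $\mathcal{R}_{sc}$ as the bifurcation parameter by varying $\tau_1$ alone; then $\phi_0,\phi_2$ and the product $4\phi_0\phi_2$ in \eqref{discriminant} are constant while $\phi_1$ depends on $\mathcal{R}_{sc}$ only through $-1/\mathcal{R}_{sc}$. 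Solving $\Delta=0$ with the sign of the bracket chosen so that $\phi_1<0$ (the vertex-positivity requirement) yields the closed form $\mathcal{R}_{sc}^{\triangle}$ of \eqref{Rsc1}, and solving $\Phi(a^*)=0$, equivalently $\mathcal{R}_{ar}=1$, yields $\mathcal{R}_{sc}^{1}$. Since $S_0$ is increasing in $\mathcal{R}_{sc}$ from \eqref{S0} (at fixed $\beta$) and $\mathcal{R}_{ar}$ is decreasing in $S_0$ from \eqref{Rar}, the map $\mathcal{R}_{sc}\mapsto\mathcal{R}_{ar}$ is a strictly decreasing bijection; this converts each $\mathcal{R}_{sc}$-threshold into the corresponding $\mathcal{R}_{ar}$-threshold and lets me define $\mathcal{R}_{ar}^c$ as the value of $\mathcal{R}_{ar}$ at $\mathcal{R}_{sc}=\mathcal{R}_{sc}^{\triangle}$, i.e. the saddle-node where the two roots merge, giving \eqref{Rarc}--\eqref{Rarc1}.

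The hard part is verifying the two geometric requirements — the vertex location $0<-\phi_1/(2\phi_2)<a^*$ throughout, and the nonemptiness $\mathcal{R}_{sc}^{1}<\mathcal{R}_{sc}^{\triangle}$ (equivalently $\mathcal{R}_{ar}^c<1$) — under the hypotheses $\beta>\beta^*$ and $\mathcal{R}_{rc}>\mathcal{R}_{rc}^*$. Rather than grind the vertex inequalities directly, I would route this through the already-proven Theorem \ref{th:E0}: by \eqref{Rrc*} the pair $\beta>\beta^*,\ \mathcal{R}_{rc}>\mathcal{R}_{rc}^*$ is exactly $f(\mathcal{R}_{rc})>0$ in \eqref{f}, which makes $E_0^*$ unstable at $\mathcal{R}_{ar}=1$ and forces the bifurcation there to be backward; this produces an admissible root for $\mathcal{R}_{ar}$ slightly below $1$, so by the parabola's shape the saddle-node value obeys $\mathcal{R}_{ar}^c<1$ and $(\mathcal{R}_{ar}^c,1)$ is nonempty. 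On this interval \eqref{de0} holds and there are two interior equilibria (case (i)); at the endpoints, \eqref{de1} gives a single root (case (ii)) and $\Phi(a^*)=0$ with the vertex inside $(0,a^*)$ gives a single root (case (iii)); for all remaining parameter values either $\Delta<0$ or the vertex leaves $(0,a^*)$, so no root survives, completing (c). The only genuinely laborious ingredients are the explicit simplification of $\Delta=0$ and $\mathcal{R}_{ar}=1$ into \eqref{Rsc1} and the check of the vertex bound $-\phi_1/(2\phi_2)<a^*$ at the endpoints; both are routine but lengthy and I would relegate them to the Appendix.
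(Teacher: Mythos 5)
Your skeleton---counting roots of $\Phi$ in $(0,a^*)$ using $\Phi(0)=\phi_0>0$ and $\mathrm{sign}\,\Phi(a^*)=\mathrm{sign}(1-\mathcal{R}_{ar})$ from \eqref{boundary}, the discriminant and vertex conditions \eqref{de0}--\eqref{de1}, and the strictly decreasing correspondence $\mathcal{R}_{sc}\mapsto S_0\mapsto\mathcal{R}_{ar}$ used to translate thresholds---is exactly the paper's argument, and your treatment of part (a), of the degenerate case $c\sigma(1-\sigma)=0$, and of $\mathcal{R}_{rc}\le1$ is complete. The genuine gap sits at the one place you deliberately deviate: to show that the hypotheses $\beta>\beta^*$, $\mathcal{R}_{rc}>\mathcal{R}_{rc}^*$ (i.e., $f(\mathcal{R}_{rc})>0$) make the two-equilibria window nonempty and keep the vertex inside $(0,a^*)$, you argue that Theorem \ref{th:E0} makes $E_0^*$ unstable at $\mathcal{R}_{ar}=1$ and that this ``forces the bifurcation there to be backward,'' producing an admissible root just below $1$. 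Theorem \ref{th:E0} is a fixed-parameter statement about the local stability of $E_0^*$; it carries no information about how the equilibrium set moves as a parameter varies. The implication ``boundary equilibrium unstable at criticality $\Rightarrow$ interior equilibria exist on the subcritical side'' is a transcritical-bifurcation theorem whose hypotheses (a transversality condition, i.e., a nonzero mixed state-parameter derivative along the bifurcating branch) you neither cite nor verify, and nothing proved in the paper up to this point supplies it. Since this inference is precisely the content that the conditions $\beta>\beta^*$ and $\mathcal{R}_{rc}>\mathcal{R}_{rc}^*$ exist to encode, the heart of part (b) remains unproven in your proposal.

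The repair is short and stays inside your own framework; it is essentially what the paper does in its remark following the theorem. At $\mathcal{R}_{ar}=1$ the quadratic \eqref{equ:S} has $a^*$ as one root, so by the product of roots the other root is $\phi_0/(\phi_2 a^*)$, and it lies in $(0,a^*)$ iff $\phi_2 a^{*2}>\phi_0$; expanding $a^*$ from \eqref{a^*} and taking square roots (legitimate since $\mathcal{R}_{rc}>1$) shows this inequality is exactly $f(\mathcal{R}_{rc})>0$ from \eqref{f}, i.e., $\beta>\beta^*$ and $\mathcal{R}_{rc}>\mathcal{R}_{rc}^*$. That settles case (iii) directly, and then continuity of the two simple roots together with your endpoint signs yields two roots in $(0,a^*)$ for $\mathcal{R}_{sc}$ slightly above $\mathcal{R}_{sc}^{1}$; hence $\mathcal{R}_{sc}^{1}<\mathcal{R}_{sc}^{\triangle}$ and $\mathcal{R}_{ar}^c<1$, after which your monotonicity argument in $\mathcal{R}_{sc}$ finishes (i) and (ii). (The paper phrases the same fact as $\mathcal{R}_{sc}^{\triangle}-\mathcal{R}_{sc}^{a^*}$ being a positive multiple of $f(\mathcal{R}_{rc})$, where $\mathcal{R}_{sc}^{a^*}$ solves $-\phi_1/(2\phi_2)=a^*$.) Alternatively, your dynamical instinct can be made rigorous---use Proposition \ref{global} and the Poincar\'e--Bendixson theorem, plus the fact that the local stable set of the saddle-node $E_0^*$ inside $\Omega$ lies on the invariant $S$-axis, to force the existence of an interior equilibrium at $\mathcal{R}_{ar}=1$---but that argument must be written out; it is not a consequence of Theorem \ref{th:E0} alone.
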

\begin{remark}
(i) The condition \eqref{de1} actually determines a lower bound $\mathcal{R}_{ar}^c$ of $\mathcal{R}_{ar}$ (note that in this case, $\Phi(a^*)>\Phi(-\phi_{1}/(2\phi_{2}))=0$, thus condition \eqref{de1} implies that $\mathcal{R}_{ar}^c<1$ by \eqref{boundary}), such that system \eqref{plane} has two different interior equilibria if and only if $\mathcal{R}_{ar}^c<\mathcal{R}_{ar}<1$.

(ii) The condition \eqref{de1} is equivalent to $\mathcal{R}_{sc}=\mathcal{R}_{sc}^{\triangle}>\mathcal{R}_{sc}^{a^*}$. However, \[
\mathcal{R}_{sc}^{\triangle}-\mathcal{R}_{sc}^{a^*}=
\dfrac{2c\sqrt{\beta(1-c)(1-\sigma)\sigma}\mathcal{R}_{sc}^{\triangle}\mathcal{R}_{sc}^{a^*}f(\mathcal{R}_{rc})}{\beta(1-c)(1-c+\sigma c)\mathcal{R}_{rc}}>0
\]
requires that $f(\mathcal{R}_{rc})>0$, i.e., $\beta>\beta^*$ and $\mathcal{R}_{rc}>\mathcal{R}_{rc}^*$, where $f(\mathcal{R}_{rc})$, $\beta^*$ and $\mathcal{R}_{rc}^*$ are given in \eqref{f} and \eqref{Rrc*}, respectively, and
\[
\mathcal{R}_{sc}^{a^*}=\dfrac{(1-c+c\sigma)\mathcal{R}_{rc}}{1-c+c\sigma\mathcal{R}_{rc}+c^2\sigma(1-\sigma)
(\mathcal{R}_{rc}-1)}
\]
comes from $-\phi_{1}/(2\phi_{2})=a^*$. This provides the existence condition of $\mathcal{R}_{sc}^{\triangle}$ and $\mathcal{R}_{ar}^c$.

(iii) Similarly, under the hypothesis of $f(\mathcal{R}_{rc})>0$, condition \eqref{de0} is equivalent to $\mathcal{R}_{sc}^{1}<\mathcal{R}_{sc}<\mathcal{R}_{sc}^{\triangle}$, i.e., $\mathcal{R}_{ar}^c<\mathcal{R}_{ar}<1$.

\end{remark}

\begin{remark}
The existence of interior equilibria depends not only on $\mathcal{R}_{ar}$, but also on $\mathcal{R}_{rc}$ and $\mathcal{R}_{sc}$. To see how the number of interior equilibria varies, we choose $\mathcal{R}_{rc}$ and $\mathcal{R}_{sc}$ as the bifurcation parameters and consider bifurcations in $(\mathcal{R}_{rc}, \mathcal{R}_{sc})$ plane.

For the case of $c\sigma(1-\sigma)>0$ and $\beta>\beta^*$,
\begin{equation*}
\mathcal{R}_{sc}^{\triangle}-\mathcal{R}_{sc}^1
=\dfrac{\mathcal{R}_{sc}^{\triangle}\mathcal{R}_{sc}^1f(\mathcal{R}_{rc})^2}{\beta(1-c)(1-c+c\sigma)\mathcal{R}_{rc}
(\mathcal{R}_{rc}-1)}\geq0,
\end{equation*}
thus, the curves $\mathcal{R}_{sc}=\mathcal{R}_{sc}^{\triangle}$ and $\mathcal{R}_{sc}=\mathcal{R}_{sc}^{1}$ contact at one point only when $\mathcal{R}_{rc}=\mathcal{R}_{rc}^*$, and thus the $(\mathcal{R}_{rc}, \mathcal{R}_{sc})$ plane will be divided into three regions $\mathcal D_i$, $i=0,1,2$; for all other cases, the $(\mathcal{R}_{rc}, \mathcal{R}_{sc})$ plane only be divided into two regions $\mathcal D_i$, $i=0,1$. If $(\mathcal{R}_{rc}, \mathcal{R}_{sc})\in\mathcal D_i$, then system \eqref{plane} has $i$ interior equilibria, see Figure \ref{Fig.2}.

\begin{figure}[h]
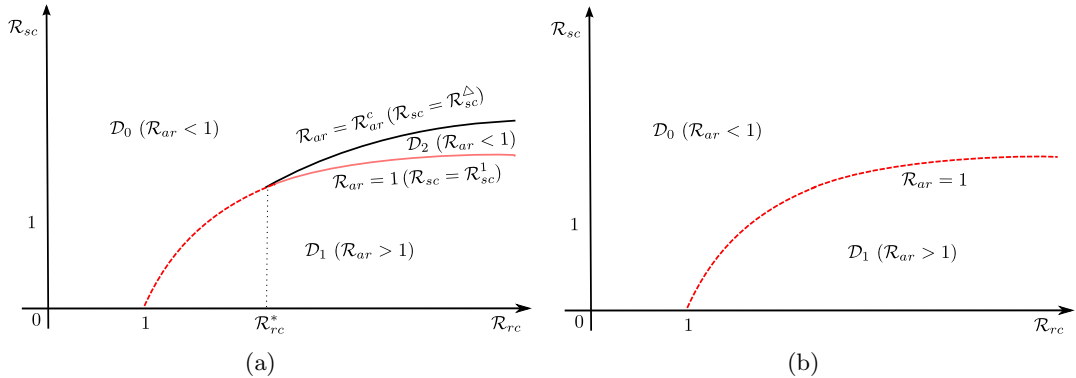

\centering
\subfigure[]{
\label{Fig2.sub.1}
\includegraphics[width=0.48\textwidth]{bifur_1}}
\subfigure[]{
\label{Fig2.sub.2}
\includegraphics[width=0.48\textwidth]{bifur_2}}
\caption{The existence of interior equilibria. System \eqref{plane} has $i$ interior equilibria if $(\mathcal{R}_{rc},\mathcal{R}_{sc})\in \mathcal D_i,\,\,i=0,1,2$. Plot(a) shows the bifurcations when $c\sigma(1-\sigma)>0$ and $\beta>\beta^*$. The red curve plots $\mathcal{R}_{sc}=\mathcal{R}_{sc}^{1}$, i.e., $\mathcal{R}_{ar}=1$, and the black curve plots $\mathcal{R}_{sc}=\mathcal{R}_{sc}^{\triangle}$, i.e., $\mathcal{R}_{ar}=\mathcal{R}_{ar}^c$. The dashed part represents there is no interior equilibrium, and the solid part corresponds to one interior equilibrium. For all other parameters conditions, the bifurcation diagram can be shown by Plot(b).}
\label{Fig.2}
\end{figure}
\end{remark}

\subsubsection{Local stability of interior equilibria}

The Jacobian matrix of system \eqref{plane} at the interior equilibrium $E^{*}(S^{*},R^{*})$
is given by
\[
J(E^{*})=\left(\begin{array}{cc}
U_S(S^*,R^*) & -\beta(1-c\sigma)S^{*}\\
-\beta(1-c+c\sigma)R^{*} & V_R(S^*,R^*)
\end{array}\right),
\]
where
\begin{equation*}\begin{split}
U_S(S^*,R^*)&=\dfrac{\partial U(S^*,R^*)}{\partial S}=\beta-\tau_{1}-\tau_{2}-\gamma-\mu-2\beta S^{*}-\beta(1-c\sigma)R^{*},\\
V_R(S^*,R^*)&=\dfrac{\partial V(S^*,R^*)}{\partial R}=\beta(1-c)-\tau_{2}-\gamma-\mu-\beta(1-c+c\sigma)S^{*}-2\beta(1-c)R^{*},
\end{split}\end{equation*}
or the simplified matrix
\[
J(E^{*})=\left(\begin{array}{cc}
-\beta S^{*}-m\mu/S^{*} & -\beta(1-c\sigma)S^{*}\\
-\beta(1-c+c\sigma)R^{*} & -\beta(1-c)R^{*}
\end{array}\right).
\]
Obviously, the trace and the determinant of $J(E^{*})$ are:
\begin{equation}\label{det}
\begin{split}
\mathrm{tr}J(E^{*}) & =  -\beta S^{*}-m\mu/S^{*}-\beta(1-c)R^{*}<0,\\
\mathrm{det}J(E^{*}) & =  \beta R^{*}[m\mu(1-c)-\beta c^{2}\sigma(1-\sigma)S^{*2}]/S^{*}=\beta R^{*}(\phi_{0}-\phi_{2}S^{*2})/S^{*}.
\end{split}
\end{equation}
Note that
\begin{equation}
\begin{split}\label{delta}
&(\mathrm{tr}J(E^{*}))^2-4\mathrm{det}J(E^{*})\\
=&(\beta S^{*}+m\mu/S^{*}-\beta(1-c)R^{*})^2+4\beta^2(1-c\sigma)(1-c+c\sigma)S^{*}R^*>0.
\end{split}
\end{equation}
Thus, the stability and the type of the interior equilibria can be determined by \eqref{det} and \eqref{delta}, which is
described in more detail in the following result.

\begin{theorem}\label{th:local}
Consider the interior equilibria described in Theorem \ref{th:S}.
\begin{itemize}
\item[(a)]  If $\mathcal{R}_{ar}>1$, then the unique interior equilibrium is a l.a.s. node;
\item[(b)]  if $c\sigma(1-\sigma)>0$, $\beta>\beta^*$ and $\mathcal{R}_{rc}>\mathcal{R}_{rc}^*$, then for the new threshold $\mathcal{R}_{ar}^c$ ($<1$) defined in \eqref{Rarc} or \eqref{Rarc1},
    \begin{itemize}
\item[(i)]  when $\mathcal{R}_{ar}^c<\mathcal{R}_{ar}<1$ ($\mathcal{R}_{sc}^{1}<\mathcal{R}_{sc}<\mathcal{R}_{sc}^{\triangle}$), one interior equilibrium is a l.a.s. node, and the other is an unstable saddle;
\item[(ii)] when $\mathcal{R}_{ar}=\mathcal{R}_{ar}^c$ ($\mathcal{R}_{sc}=\mathcal{R}_{sc}^{\triangle}$), the unique interior equilibrium is an unstable saddle-node;
\item[(iii)] when $\mathcal{R}_{ar}=1$ ($\mathcal{R}_{sc}=\mathcal{R}_{sc}^{1}$), the unique interior equilibrium is a l.a.s. node.
    \end{itemize}
\end{itemize}
\end{theorem}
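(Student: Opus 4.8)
The plan is to read off the type of each interior equilibrium directly from the trace and determinant computed in \eqref{det}, exploiting the fact established in \eqref{delta} that $(\mathrm{tr}\,J(E^*))^2-4\det J(E^*)>0$ at every interior equilibrium. This inequality forces the two eigenvalues to be real and distinct, so no interior equilibrium can be a focus or a center: each is a node, a saddle, or (when an eigenvalue vanishes) a degenerate point. Since $\mathrm{tr}\,J(E^*)<0$ always holds, the entire classification is governed by the sign of $\det J(E^*)$.

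The key step is to rewrite $\det J(E^*)$ through the polynomial $\Phi$ of \eqref{equ:S}. Because $S^*$ is a root, $\Phi(S^*)=\phi_2 S^{*2}+\phi_1 S^*+\phi_0=0$, so $\phi_0=-\phi_2 S^{*2}-\phi_1 S^*$ and hence $\phi_0-\phi_2 S^{*2}=-S^*(2\phi_2 S^*+\phi_1)=-S^*\Phi'(S^*)$. Substituting into \eqref{det} yields the clean identity $\det J(E^*)=-\beta R^*\Phi'(S^*)$. Since $\beta>0$ and $R^*>0$ at an interior point, $\mathrm{sign}(\det J(E^*))=-\mathrm{sign}(\Phi'(S^*))$: a root where $\Phi$ decreases gives $\det J(E^*)>0$, hence a l.a.s.\ node (using $\mathrm{tr}\,J<0$ and real eigenvalues); a root where $\Phi$ increases gives $\det J(E^*)<0$, hence a saddle; and a double root gives $\det J(E^*)=0$.

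With this reduction, the hyperbolic cases follow from the root geometry already pinned down in the proof of Theorem \ref{th:S}. In case (a) the unique interior root lies in $(0,a^*)$ with $\Phi(0)>0$ and $\Phi(a^*)<0$, so $\Phi$ decreases through it and $\Phi'(S^*)<0$: a l.a.s.\ node (this argument is insensitive to whether $\phi_2=0$ or $\phi_2>0$). In case (b)(i) the two roots $S_1^*<S_2^*$ straddle the vertex $-\phi_1/(2\phi_2)$ of the upward parabola ($\phi_2>0$); at the smaller root $\Phi'(S_1^*)<0$ (l.a.s.\ node) and at the larger $\Phi'(S_2^*)>0$ (saddle), exactly the asserted node/saddle pair. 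In case (b)(iii), where $\Phi(a^*)=0$ with the vertex interior to $(0,a^*)$, the interior equilibrium is the root distinct from $a^*$, which sits to the left of the vertex where $\Phi$ is still decreasing, so $\Phi'(S^*)<0$ and it is again a l.a.s.\ node.

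The main obstacle is case (b)(ii), the degenerate case $\mathcal{R}_{ar}=\mathcal{R}_{ar}^c$ (equivalently $\Delta=0$), where the roots coalesce into the double root $S^*=-\phi_1/(2\phi_2)$, so $\Phi'(S^*)=0$ and $\det J(E^*)=0$ while $\mathrm{tr}\,J(E^*)<0$; the eigenvalues are then $0$ and $\mathrm{tr}\,J(E^*)<0$, and identifying the topological type needs more than the linearization. I would translate $E^*$ to the origin, put the linear part in the form $\mathrm{diag}(0,\lambda)$ with $\lambda=\mathrm{tr}\,J(E^*)<0$, compute the one-dimensional center manifold $y=h(x)=O(x^2)$, and reduce the flow on it to $\dot{x}=b\,x^2+O(x^3)$. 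The nondegeneracy $b\neq0$ should be inherited from the fact that $S^*$ is a nondegenerate minimum of $\Phi$, namely $\Phi''(S^*)=2\phi_2=2c^2\sigma(1-\sigma)>0$ under the hypothesis $c\sigma(1-\sigma)>0$; this is precisely the quadratic coalescence of a node and a saddle, so the reduced equation carries a nonzero quadratic term and the origin is a saddle-node. Since a saddle-node is semistable along the center direction (attracting on one side, repelling on the other), it is not Lyapunov stable, i.e.\ an unstable saddle-node, which completes the proof. The only genuine computation is confirming $b\neq0$; I expect it to reduce, after the change of variables, to the manifestly positive quantity $\phi_2=c^2\sigma(1-\sigma)$, so this step should be routine rather than delicate.
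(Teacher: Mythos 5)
Your classification of the hyperbolic cases is correct and is essentially the paper's own argument, organized slightly better. The identity $\det J(E^{*})=-\beta R^{*}\Phi'(S^{*})$, obtained by eliminating $\phi_{0}$ via $\Phi(S^{*})=0$ in \eqref{det}, is exactly what the paper computes case by case with the explicit roots (there $\Phi'(S^{*})=\mp\sqrt{\Delta}$, giving $\det J=\pm\beta R^{*}\sqrt{\Delta}$); your version has the small advantages of absorbing the linear case $\phi_{2}=0$ of part (a) without a separate computation and of making the node/saddle assignment in (b)(i) and the node in (b)(iii) read off directly from the geometry of the parabola. Combined with $\mathrm{tr}\,J(E^{*})<0$ and \eqref{delta}, this settles (a), (b)(i) and (b)(iii) completely.

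The soft spot is (b)(ii), which is precisely the step the paper spends its effort on. You correctly set up the center-manifold reduction $\dot{x}=b\,x^{2}+O(x^{3})$ and correctly identify that everything hinges on $b\neq0$, but you then assert that this nondegeneracy ``should be inherited'' from $\Phi''(S^{*})=2\phi_{2}>0$ and that the computation ``should be routine''; as written this is an expectation, not an argument. The inheritance claim is in fact true: for a semi-hyperbolic equilibrium the vanishing order of the reduced vector field equals the local intersection multiplicity of the two nullclines, and here the $V$-nullcline is the line \eqref{R*} on which $U$ restricts to a positive multiple of $\Phi$, so that multiplicity is the multiplicity of $S^{*}$ as a root of \eqref{equ:S}, namely $2$; this is essentially the classification of semi-hyperbolic points in \cite{ZDHD}. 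To complete your proof you must either invoke that fact explicitly or do what the paper does: carry out the affine transformation and compute
\[
b=-\dfrac{\beta\sigma(1-\sigma)c^{2}(1-c)^{2}R^{*}}{(1-c\sigma)(1-c+c\sigma)S^{*}+(1-c)^{2}R^{*}}<0,
\]
which confirms your prediction that $b$ is a nonzero multiple of $c^{2}\sigma(1-\sigma)$ and yields the unstable saddle-node. With either of these additions your argument matches the paper's conclusion in all cases.
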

The proof of Theorem \ref{th:local} can be found in the Appendix.
\begin{remark}
From the analysis of the case $c\sigma(1-\sigma)=0$, we know that system \eqref{plane} can not have a backward bifurcation when superinfection is not considered ($\sigma=0$). However, system \eqref{plane} which includes the superinfection ($\sigma>0$) may exhibit the backward bifurcation, see Theorem \ref{th:local}. Apparently, the occurrence of backward
bifurcation will make it more challenging to control the antibiotic-resistant infection since the threshold for eradication is reduced from 1 to $\mathcal{R}_{ar}^c$. The determination of the new threshold plays a key role in providing quantitative criteria to evaluate the success of interventions to control infections in hospitals.
\end{remark}

\subsection{Global dynamics}\label{global}

Based on the analysis in the previous subsections, we are going to give
global dynamics of system \eqref{plane} on the positively invariant set
$\Omega$. To do this, we use Dulac theorem \cite{Y} to rule out the existence of limit cycles and
homoclinic loops.

\begin{proposition}\label{global}
System \eqref{plane} has no limit cycles or homoclinic loops.
\end{proposition}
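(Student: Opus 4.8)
The plan is to rule out both kinds of closed invariant configurations at once via the Bendixson--Dulac criterion on the open interior $\mathrm{int}\,\Omega=\{(S,R):S>0,\ R>0,\ S+R<1\}$, which is simply connected. The whole argument reduces to exhibiting a Dulac function $B(S,R)$ for which $\partial_S(BU)+\partial_R(BV)$ has a strict constant sign. Because the line $\{R=0\}$ is invariant ($V$ carries a factor $R$), while the source term $m\mu$ and the remaining terms of $U$ carry a factor $S$, the natural candidate is
\[
B(S,R)=\frac{1}{SR},
\]
which is smooth and positive on $\mathrm{int}\,\Omega$. First I would compute the Dulac expression directly from $U,V$ in \eqref{plane}; a short calculation gives
\[
\frac{\partial (BU)}{\partial S}+\frac{\partial (BV)}{\partial R}
=-\frac{m\mu}{S^{2}R}-\frac{\beta}{R}-\frac{\beta(1-c)}{S}.
\]
Since $\beta>0$, $c\in[0,1)$, $\mu>0$ and $m\ge0$, the last two terms are strictly negative on $\mathrm{int}\,\Omega$, so this quantity is strictly negative and nowhere zero there.

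Next I would convert strict sign-definiteness into the nonexistence statements. Any periodic orbit must lie in $\mathrm{int}\,\Omega$: the segment $\{R=0\}$ is invariant, so no orbit crosses it, and by Proposition \ref{invar_set} the flow points inward on the rest of $\partial\Omega$, so a closed orbit cannot meet the boundary. If $\Gamma$ were such an orbit bounding a region $D\subset\mathrm{int}\,\Omega$, Green's theorem together with $BU\,dR-BV\,dS=B(UV-VU)\,dt=0$ along $\Gamma$ forces $\iint_D\big(\partial_S(BU)+\partial_R(BV)\big)\,dS\,dR=0$, contradicting the strict negativity above; hence there are no limit cycles. A homoclinic loop to an \emph{interior} saddle (the relevant case being the unstable interior equilibrium of the bistable regime in Theorem \ref{th:local}(b)(i)) bounds a region $D\subset\mathrm{int}\,\Omega$ as well, so the identical Green's-theorem contradiction applies verbatim.

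The delicate case, and the main obstacle, is a possible homoclinic loop to the boundary saddle $E_0^*(S_0,0)$, which is a saddle precisely when $\mathcal R_{ar}>1$ (Theorem \ref{th:E0}); its stable direction lies along the invariant line $\{R=0\}$, so a homoclinic orbit in $R>0$ can approach $E_0^*$ asymptotically, and the region $D$ it bounds touches $\{R=0\}$ at the single point $E_0^*$, exactly where $B=1/(SR)$ blows up. I would resolve this by an exhaustion argument: for $\varepsilon>0$ set $D_\varepsilon=D\cap\{R\ge\varepsilon\}$ and apply Green's theorem on $D_\varepsilon$. The contribution of the orbit arc vanishes as before, leaving $\iint_{D_\varepsilon}\big(\partial_S(BU)+\partial_R(BV)\big)\,dS\,dR=I_b(\varepsilon)$, where $I_b(\varepsilon)=-\int \tfrac{1}{S}\big[\beta(1-c)(1-S-\varepsilon)-(\tau_2+\gamma+\mu)-\sigma\beta cS\big]\,dS$ runs over the cut $\{R=\varepsilon\}$ between the two crossing points, whose $S$-coordinates both tend to $S_0>0$ as $\varepsilon\to0^+$. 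Since the integrand stays bounded near $(S_0,0)$ while the integration interval shrinks to the point $S_0$, we get $I_b(\varepsilon)\to0$. On the other hand, fixing any $\varepsilon_0$, monotonicity of the region gives $\iint_{D_\varepsilon}(\cdots)\le\iint_{D_{\varepsilon_0}}(\cdots)=:-\delta<0$ for all $\varepsilon<\varepsilon_0$, so $I_b(\varepsilon)\le-\delta$ cannot tend to $0$ — a contradiction. This excludes the boundary homoclinic loop and completes the proof. The only point demanding care is the $\varepsilon\to0$ bookkeeping near the singular vertex $E_0^*$; everything else is the clean Dulac computation above.
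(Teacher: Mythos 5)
Your proof is correct and takes essentially the same approach as the paper: the identical Dulac function $B(S,R)=1/(SR)$, the identical strictly negative divergence, and the Bendixson--Dulac/Green's-theorem conclusion (the paper simply cites Dulac's theorem from \cite{Y} rather than unpacking it). One remark: your ``delicate case'' of a homoclinic loop at the boundary saddle $E_0^*$ is vacuous --- when $\mathcal{R}_{ar}>1$ the stable manifold of $E_0^*$ is contained in the invariant line $R=0$, so no orbit with $R>0$ can tend to $E_0^*$ as $t\to+\infty$ (contrary to your assertion that such an orbit ``can approach $E_0^*$ asymptotically''), and hence your $\varepsilon$-exhaustion argument, while valid, is not needed.
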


\begin{proof}
Introduce a function
\[
B(S,R)=\dfrac{1}{SR},
\]
then
\[
\dfrac{\partial (BU)}{\partial S}+\dfrac{\partial (BV)}{\partial R}=-\dfrac{m\mu}{RS^2}-\dfrac{\beta}{R}-
\dfrac{\beta(1-c)}{S}<0,
\]
for all $S>0$, $R>0$.

The proposition follows from Dulac theorem \cite{Y}.
\end{proof}

By Proposition \ref{invar_set}, Theorems \ref{th:E0}, \ref{th:S}, \ref{th:local} and Proposition \ref{global}, we obtain the global dynamics of system \eqref{plane} as follows.

\begin{theorem}\label{th} Let $\mathcal{R}_{ar}$ be the invasion reproduction number defined in \eqref{rate} or \eqref{Rar}, and $\mathcal{R}_{sc}$, $\mathcal{R}_{rc}$ be the control reproduction numbers defined in \eqref{Rsc} and \eqref{Rrc}, respectively. The following statements hold:
\begin{itemize}
\item[(a)]  if $\mathcal{R}_{ar}>1$, then the phase portrait of system \eqref{plane} is topologically equivalent to Figure \ref{Fig3.sub.1};
\item[(b)]  if $c\sigma(1-\sigma)>0$, $\beta>\beta^*$, and $\mathcal{R}_{rc}>\mathcal{R}_{rc}^*$, then there exists a new threshold $\mathcal{R}_{ar}^c$ ($<1$) defined in \eqref{Rarc} or \eqref{Rarc1}, such that
    \begin{itemize}
\item[(i)]  when $\mathcal{R}_{ar}^c<\mathcal{R}_{ar}<1$ ($\mathcal{R}_{sc}^{1}<\mathcal{R}_{sc}<\mathcal{R}_{sc}^{\triangle}$), the phase portrait of system \eqref{plane} is topologically equivalent to Figure \ref{Fig3.sub.2};
\item[(ii)] when $\mathcal{R}_{ar}=\mathcal{R}_{ar}^c$ ($\mathcal{R}_{sc}=\mathcal{R}_{sc}^{\triangle}$), the phase portrait of system \eqref{plane} is topologically equivalent to Figure \ref{Fig3.sub.3};
\item[(iii)] when $\mathcal{R}_{ar}=1$ ($\mathcal{R}_{sc}=\mathcal{R}_{sc}^{1}$), the phase portrait of system \eqref{plane} is topologically equivalent to Figure \ref{Fig3.sub.1};
    \end{itemize}
\item[(c)] for all other cases, the phase portrait of system \eqref{plane} is topologically equivalent to Figure \ref{Fig3.sub.4},
\end{itemize}
where $\beta^*$ and $\mathcal{R}_{rc}^*$ are given by \eqref{Rrc*}, and $\mathcal{R}_{sc}^1$ and $\mathcal{R}_{sc}^{\triangle}$ are given by \eqref{Rsc1}.
\end{theorem}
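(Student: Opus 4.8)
The plan is to assemble the global picture from the local data already in hand, exploiting the planarity of \eqref{plane} together with the Dulac obstruction of Proposition \ref{global}. First I would record the two structural facts that make a Poincar\'e--Bendixson argument available: by Proposition \ref{invar_set} the region $\Omega$ is compact and positively invariant, so every forward orbit is bounded and its $\omega$-limit set is a nonempty, compact, connected subset of $\Omega$; and by Proposition \ref{global} the field $(BU,BV)$ with $B=1/(SR)$ has $\operatorname{div}(BU,BV)<0$ throughout the open quadrant, which rules out periodic orbits and homoclinic loops. I would then inspect the flow on $\partial\Omega$: the side $R=0$ is invariant (since $V$ carries a factor $R$) and supports a one-dimensional flow whose only rest point is $E_0^*$, with $\dot S>0$ for $0<S<S_0$ and $\dot S<0$ for $S>S_0$; on $S=0$ one has $\dot S=m\mu>0$; and on $S+R=1$ the vector field points strictly inward, as in the proof of Proposition \ref{invar_set}. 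Hence no separatrix polycycle can live on $\partial\Omega$, and any polycycle with interior in $\{S,R>0\}$ is excluded by the same Green's-theorem computation that kills cycles. The upshot is that \emph{every} $\omega$-limit set in $\Omega$ is a single equilibrium.

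With recurrence eliminated, the problem reduces to counting the equilibria and reading off their local types from Theorems \ref{th:E0}, \ref{th:S} and \ref{th:local}, then distributing basins. In case (a) and in case (b)(iii) (where $\mathcal{R}_{ar}=1$ together with $\beta>\beta^*$, $\mathcal{R}_{rc}>\mathcal{R}_{rc}^*$ forces $f(\mathcal{R}_{rc})>0$, so $E_0^*$ is non-attracting by Theorem \ref{th:E0}) the equilibria are a non-attracting $E_0^*$ whose only stable separatrix lies in the invariant line $R=0$, together with a unique interior node that is l.a.s. by Theorem \ref{th:local}. An interior orbit cannot accumulate on $E_0^*$ without lying in that one-dimensional stable set inside $R=0$, so every interior orbit converges to the interior node, giving Figure \ref{Fig3.sub.1}. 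In case (c) the only equilibrium is $E_0^*$, which is l.a.s. (when $\mathcal{R}_{ar}<1$, or when $\mathcal{R}_{ar}=1$ and $f(\mathcal{R}_{rc})\le0$, noting that $f\le0$ holds automatically whenever $c\sigma(1-\sigma)=0$), and there is no interior equilibrium by Theorem \ref{th:S}(c); since the unique available $\omega$-limit is $E_0^*$, it is globally asymptotically stable on $\Omega$, which is Figure \ref{Fig3.sub.4}.

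Case (b) is where the genuine work lies. In (b)(i) there are three relevant equilibria: the l.a.s. node $E_0^*$, an interior l.a.s. node, and an interior saddle. I would argue that the stable manifold of the interior saddle is a smooth separatrix that splits $\Omega$ into exactly two basins, those of $E_0^*$ and of the interior node, producing the bistable portrait of Figure \ref{Fig3.sub.2}. In (b)(ii) the two interior equilibria have coalesced into a saddle-node (Theorem \ref{th:local}(b)(ii) gives $\det J(E^*)=0$ and $\operatorname{tr}J(E^*)<0$, consistent with the double root $S^{*2}=\phi_0/\phi_2$), and with $E_0^*$ still l.a.s. one reads off Figure \ref{Fig3.sub.3} with its single hyperbolic sector bordering the parabolic attracting sector.

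The hardest part is not the local classification but the global basin description in the bistable case (b)(i): I must show that the two branches of the stable manifold of the interior saddle leave every equilibrium and reach $\partial\Omega$ (one branch toward the region near $R=0$, the other toward the hypotenuse) without spiraling and without reconnecting to an equilibrium, so that they genuinely partition $\Omega$ into precisely two basins. This is exactly where ruling out \emph{all} separatrix polycycles, not merely homoclinic loops, becomes essential, and why Proposition \ref{global} must be used in the strong Dulac form forbidding every cycle bounding a region in the open quadrant. Once no orbit can return, the saddle separatrices have nowhere to accumulate but the boundary; the monotone flow on $R=0$ toward $E_0^*$ together with the inward flow on the other two sides then pins the two basins and completes the identification with Figure \ref{Fig3.sub.2}.
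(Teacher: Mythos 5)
Your proposal is correct and follows essentially the same route as the paper: the paper's entire proof of Theorem \ref{th} is the one-line assembly of Proposition \ref{invar_set}, Theorems \ref{th:E0}, \ref{th:S}, \ref{th:local} and Proposition \ref{global}, which is exactly the skeleton you use. The only difference is that you make explicit the Poincar\'e--Bendixson step, the boundary-flow analysis, and the saddle-separatrix basin decomposition that the paper leaves implicit, so your write-up is a fleshed-out version of the published argument rather than a different one.
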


\begin{figure}[h]
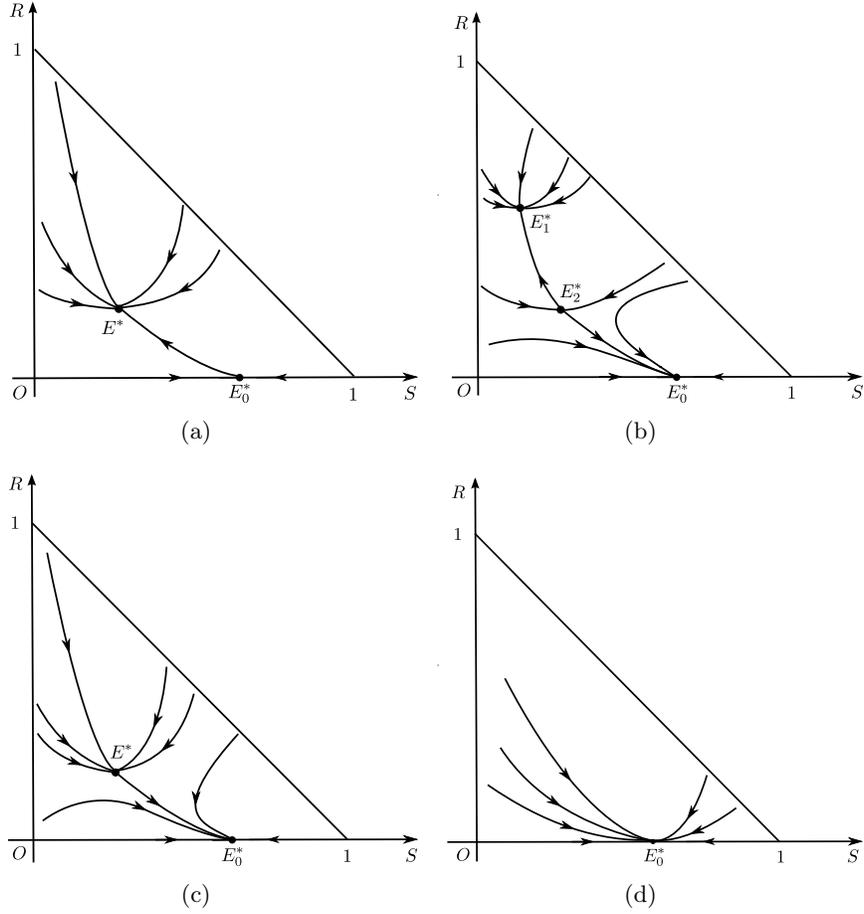

\centering
\subfigure[]{
\label{Fig3.sub.1}
\includegraphics[width=0.39\textwidth]{phase_a}}
\subfigure[]{
\label{Fig3.sub.2}
\includegraphics[width=0.39\textwidth]{phase_b}}
\subfigure[]{
\label{Fig3.sub.3}
\includegraphics[width=0.39\textwidth]{phase_c}}
\subfigure[]{
\label{Fig3.sub.4}
\includegraphics[width=0.39\textwidth]{phase_d}}
\caption{The phase portraits of system \eqref{plane}. }
\label{Fig.3}
\end{figure}

\section{Numerical analysis}\label{numerical}

This section is devoted to numerical simulations for the global results we obtained in last section, the possible backward bifurcation, and the dependence of $\mathcal{R}_{ar}^c$ given in \eqref{Rarc} on parameters.

\begin{figure}[h]
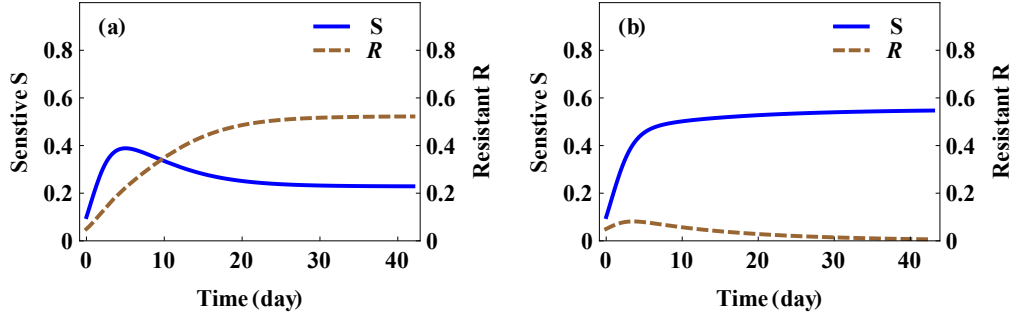

\centering
\includegraphics[width=0.45\textwidth]{fig1}\quad
\includegraphics[width=0.45\textwidth]{fig2}
\caption{Simulation results for the cases $\mathcal{R}_{ar}>1$ (a), and $\mathcal{R}_{ar}<1$ and no interior equilibrium exists (b).}
\label{portrait}
\end{figure}
\begin{figure}[h]
\centering
\includegraphics[width=0.6\textwidth]{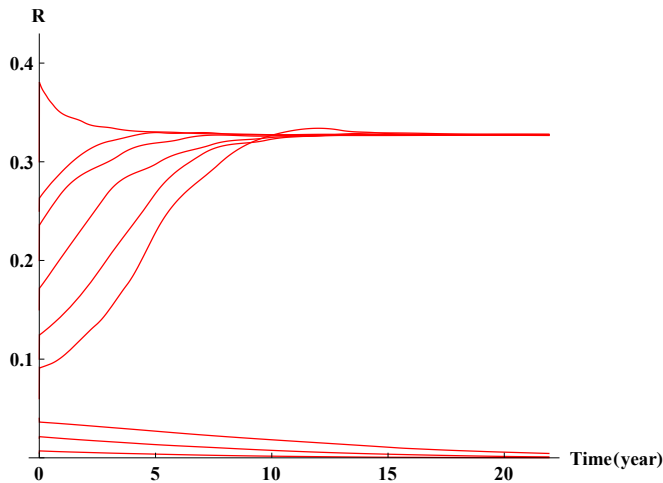}
\caption{Simulation of system \eqref{plane} for the case $\mathcal{R}_{ar}^c<\mathcal{R}_{ar}<1$. In this case, system \eqref{plane} has two different interior equilibria. Solution curves for different initial conditions are plotted, only demonstrating the fraction of individuals $R$ which carry the bacteria resistant to drug 1.}
\label{bistable}
\end{figure}
Figure \ref{portrait} demonstrates the simulation results of the system for the case $\mathcal{R}_{ar}>1$ (a), and the case $\mathcal{R}_{ar}<1$ and no interior equilibrium exists (b). Numerical results show that, in (a), system \eqref{plane} has a unique interior equilibrium,  and all the solution curves with initial values $S(0)\geq0,\, R(0)>0$ will eventually converge to this interior equilibrium; in (b), the system will stabilize at the resistance-free equilibrium, which is globally asymptotically stable. The parameter values in Figure \ref{portrait}(a) are
$m=0.75$, $\mu=0.1$, $\beta=1$,
$\tau_1=0.35$, $\tau_2=0.1$, $\gamma=1/30$, $\sigma=0.25$, $c=0.05$, and thus $\mathcal{R}_{ar}=1.5$; most parameter values in Figure \ref{portrait}(b) are the same as those in Figure \ref{portrait}(a), except $\tau_1=0.1$ and $\tau_2=0.35$, and in this case $\mathcal{R}_{ar}=0.9$. The parameter values we used come from \cite{C,LBL,LB}.

\begin{figure}[h]
\centering
\includegraphics[width=0.6\textwidth]{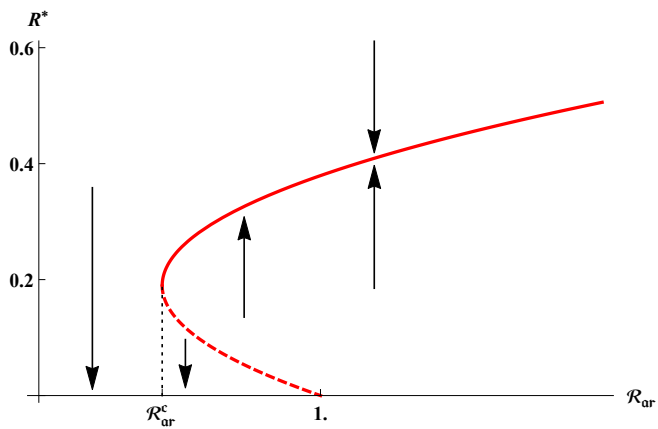}
\caption{Illustration of the backward bifurcation described in Theorem \ref{th:local}.}
\label{bifurcation}
\end{figure}

Figure \ref{bistable} plots the solution curves of system \eqref{plane} for different initial conditions when $\mathcal{R}_{ar}^c<\mathcal{R}_{ar}<1$. Only the frequency of patients with the antibiotic-resistant bacteria $R$ are shown.
We can see clearly that the solution curves with the initial values near $R(0)=0$ will converge to the resistance-free equilibrium $E_0$, while the solution curves with the larger initial values will converge to the stable interior equilibrium $E_1^*$ as $t\rightarrow\infty$, which implies a bistable phenomenon.
The parameter values in this figure are:
$m=0.2$, $\mu=1/15$, $\beta=5$,
$\tau_1=0.3$, $\tau_2=0.19$, $\gamma=1/30$, $\sigma=0.3$, $c=0.15$. For this set of parameter values, $\mathcal{R}_{ar}=0.9997$ and $\mathcal{R}_{ar}^c=0.9994$.

Figure \ref{bifurcation} shows the backward bifurcation described in Theorem \ref{th:local}.  The equilibrium fraction of individuals carrying the antibiotic-resistant bacteria $R^*$ is plotted as a function of $\mathcal{R}_{ar}$. The solid part and the dashed part represent the stable and unstable interior equilibria, respectively. The leftmost point of the curve (the intersection of the solid and dashed branches) corresponds to the lower bound of $\mathcal{R}_{ar}$, $\mathcal{R}_{ar}^c$, for the existence of two interior equilibria. The vertical arrows indicate the convergence of solutions with initial
conditions in the respective regions as $t\rightarrow\infty$. In this figure, we choose $\tau_1$ as the independent variable to vary $\mathcal{R}_{ar}$, and other parameter values are the same as those in Figure \ref{bistable}.

\begin{figure}[h]
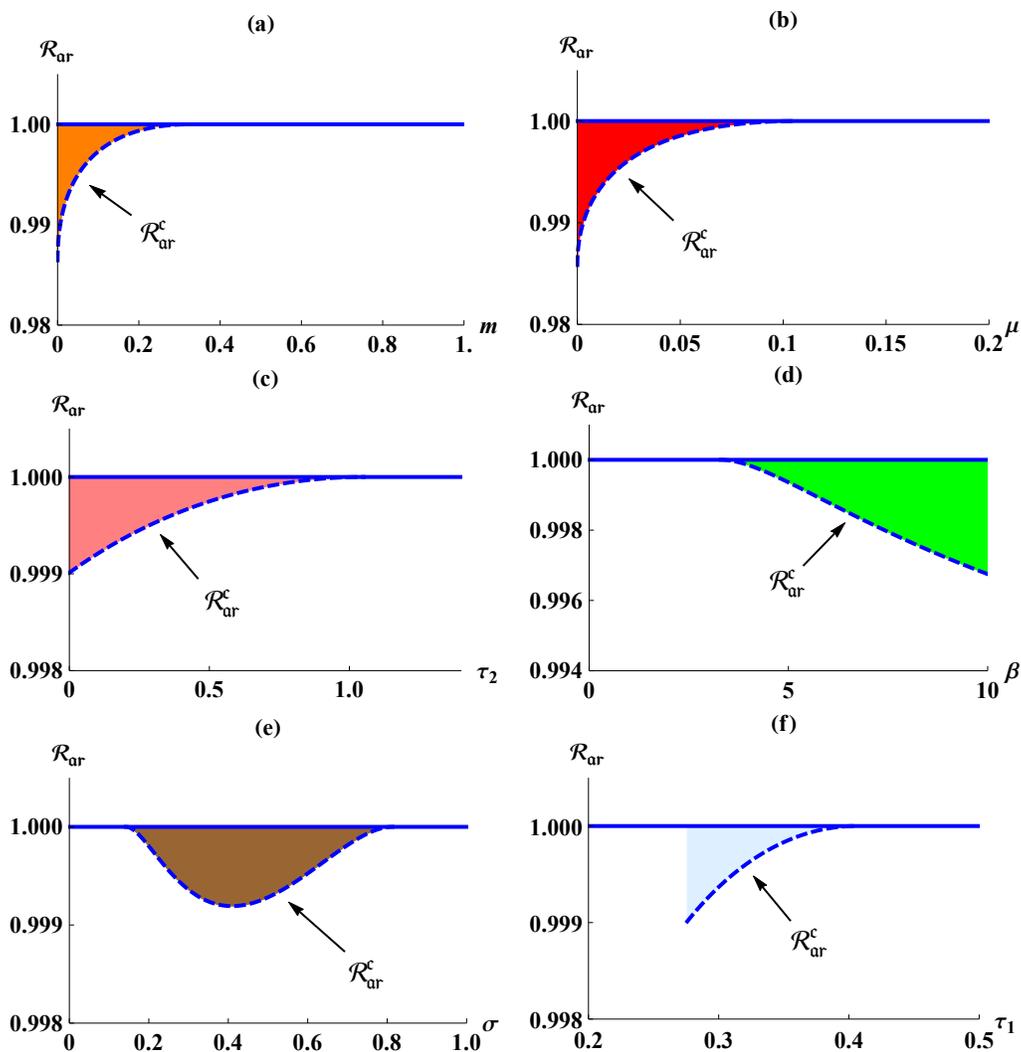

\centering
\includegraphics[width=0.45\textwidth]{Rarc_a.pdf}\quad
\includegraphics[width=0.45\textwidth]{Rarc_b.pdf}
\includegraphics[width=0.45\textwidth]{Rarc_c.pdf}\quad
\includegraphics[width=0.45\textwidth]{Rarc_d.pdf}
\includegraphics[width=0.45\textwidth]{Rarc_e.pdf}\quad
\includegraphics[width=0.45\textwidth]{Rarc_f.pdf}
\caption{Dependence of the lower bound $\mathcal{R}_{ar}^c$ on the parameters $m$, $\mu$, $\tau_2$, $\beta$, $\sigma$ and $\tau_1$.}
\label{dependence}
\end{figure}

Figure \ref{dependence} illustrates the dependence of lower bound $\mathcal{R}_{ar}^c$ given by \eqref{Rarc} on parameters $m$, $\mu$, $\tau_2$, $\beta$, $\sigma$ and $\tau_1$, respectively. The dashed curve corresponds to the lower bound $\mathcal{R}_{ar}^c$ of $\mathcal{R}_{ar}$, such that there are two interior equilibria in the shaded region when $\mathcal{R}_{ar}^c<\mathcal{R}_{ar}<1$. Obviously, $\mathcal{R}_{ar}^c$ increases with $m$, and disappears when $m$ exceeds a threshold value, see Figure \ref{dependence}(a). In Figure \ref{dependence}(d), $\mathcal{R}_{ar}^c$ decreases with $\beta$. It is worth noting that when all other parameters are fixed, $\mathcal{R}_{ar}^c$ may exist if and only if $\beta$ reaches some threshold value ($\beta^*$). The dependence of $\mathcal{R}_{ar}^c$ on parameters $\mu$ (see Figure \ref{dependence}(b)), $\tau_2$ (see Figure \ref{dependence}(c)) and $\tau_1$ (see Figure \ref{dependence}(f)) is similar to that on $m$, however, for the parameter $\tau_1$, $\mathcal{R}_{ar}^c$ may exist only if $\tau_1$ reaches some threshold value. The most special one is the dependence of $\mathcal{R}_{ar}^c$ on parameter $\sigma$, see Figure \ref{dependence}(e), that $\mathcal{R}_{ar}^c$ decreases first with $\sigma$, then increases, until $\sigma$ exceeds some threshold value.  The parameters fixed in this figure have the same values as those in Figure \ref{bistable}.

\begin{figure}[h]
\centering
\includegraphics[width=0.6\textwidth]{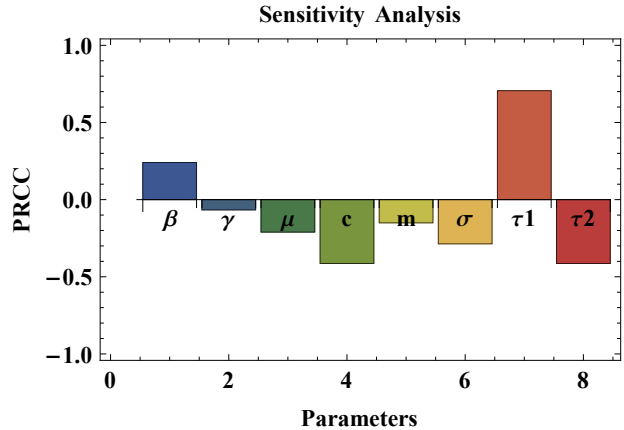}
\caption{Sensitivity analysis of parameters in the model.}
\label{Sensitivity}
\end{figure}

Figure \ref{Sensitivity} exhibits a sensitivity analysis of the model based on Latin Hypercube Sampling. We suppose that $m$ and $\sigma$ obey Triangular distribution[0,1], $\tau_1$ and $\tau_2$ obey Uniform distribution [0,1], $\beta$ obeys Triangular distribution[0,10], $\gamma$ obeys Uniform distribution [1/60,1/30], $\mu$ obeys Uniform distribution [1/20,1/7], and $c$ obeys Triangular distribution[0,0.3]. It is clearly shown that the treatment of drug 1 ($\tau_1$) has the most important influence on the frequency of patients colonized with antibiotic-resistant bacteria, and they are positively correlated, while the treatment of drug 2 is negatively correlated with it.

\section{Discussion}\label{discussion}

Antibiotic resistance has caused worldwide concern, and the studies on the transmission
dynamics of antibiotic-resistant bacteria are important for us to understand and take measures to control it. In the present paper, we give a full analysis of the global dynamics of an antibiotic-resistance model, in which superinfection is included,
and numerical simulations are done to verify the analytical results. Our results suggest that when there is no superinfection ($\sigma=0$), the system has a unique interior equilibrium if and only if $\mathcal{R}_{ar}>1$. That is to say, the model can't generate the backward bifurcation without superinfection. Thus, superinfection ($\sigma>0$) increases the possibility of occurrence of the backward bifurcation.

From model analysis and simulations, we find some threshold conditions for the elimination of transmission of the antibiotic-resistant bacteria, for example, controlling $\mathcal{R}_{rc}\leq1$ such that $\mathcal{R}_{ar}<1$, or controlling $\mathcal{R}_{ar}<\mathcal{R}_{ar}^c$ and so on. In practice, it means control measures can be aimed at reducing the primary transmission rate ($\beta$) or increasing the use of drug 2 ($\tau_{2}$) with cautions that no new resistance develops.

The occurrence of backward bifurcation increases the difficulty to control the resistance transmission. In this situation, a control measure also can be aimed at increasing $\mathcal{R}_{ar}^c$. It is necessary to identify the factors that influence the size of $\mathcal{R}_{ar}^c$. We have shown the dependence of $\mathcal{R}_{ar}^c$ on parameters $m$, $\mu$, $\tau_2$, $\beta$, $\sigma$ and $\tau_1$ in Figure \ref{dependence}. Obviously, the increase of the proportion of admitted already colonized with sensitive bacteria will increase $\mathcal{R}_{ar}^c$, and thus this reduces the size of the window $(\mathcal{R}_{ar}^c,1)$ for multiple interior equilibria; it will have the same effect to increase the treatment rate of drug 2 properly, reduce the primary transmission rate, or reduce the average length of stay in hospital, etc.

In the Appendix, we also provide the global dynamics of the model for $m=0$. Obviously, they have enormous difference with the case $m>0$.

In future study, we will consider the isolation of patients with antibiotic-resistant bacteria, and investigate the dynamics of the new system; or introduce a nonlinear term, for example, assuming that patients with antibiotic-resistant bacteria consume more medical resources, and this leads to a reduction in new patients admitted with sensitive bacteria, that is, the proportion of admitted already colonized with sensitive bacteria $m$ depends on the frequency of individuals who carry the antibiotic-resistant bacteria $R$. More complicated and rich dynamics and bifurcations are expected.

\section*{Appendix}{\label{appendix}}
\noindent{\bfseries{Proof of Proposition \ref{invar_set}}.} Consider the directions of the orbits of system \eqref{plane} on the boundary of  $\Omega$.

Firstly, along the $R$-axis, it follows from the first equation of \eqref{plane} that  $S'(t)|_{S=0}=m\mu\geq 0$. Hence, either $R$-axis is an invariant line of system \eqref{plane} ($m=0$), or the vector field of system \eqref{plane} at each point on $R$-axis points to the interior of $\Omega$ ($m>0$).

Secondly, consider the boundary $S+R=1$.  By system \eqref{plane}, we get that
\begin{equation*}
\frac{d(S+R)}{dt}\Big|_{R=1-S}=-(1-m)\mu-\tau _2-\gamma-\tau_1 S<0.
\end{equation*}
Therefore, any orbit starting from a point on the boundary  $S+R=1$ enters the interior of $\Omega$ as $t$ increases.

Note that $S$-axis is an invariant line of system \eqref{plane}.
By the discussions above, we conclude that $\Omega$ is a positively invariant set.

\noindent{\bfseries{Proof of Theorem \ref{th:E0}.}}
The Jacobian matrix at the resistance-free equilibrium $E_{0}^*$ is:
\[
\left(\begin{array}{cc}
-\sqrt{(\beta-\tau_{1}-\tau_{2}-\gamma-\mu)^{2}+4\beta m\mu} & -\beta(1-c\sigma)S_0\\
0 & (\tau_{2}+\gamma+\mu+\sigma\beta S_{0})(\mathcal{R}_{ar}-1)
\end{array}\right).
\]
Thus, $E_0^*$ has a negative eigenvalue $-\sqrt{(\beta-\tau_{1}-\tau_{2}-\gamma-\mu)^{2}+4\beta m\mu}$, and the other eigenvalue depends on $\mathcal{R}_{ar}$:
\[
(\tau_{2}+\gamma+\mu+\sigma\beta S_{0})(\mathcal{R}_{ar}-1).
\]

Obviously, if $\mathcal{R}_{ar}<1$, then $E_0^*$ is a l.a.s. node, and if $\mathcal{R}_{ar}>1$, then $E_0^*$ is an unstable saddle. When $\mathcal{R}_{ar}=1$, the eigenvalue dependent on $\mathcal{R}_{ar}$ equals to 0, and thus
we need to apply center manifold theorem \cite{ZDHD} to judge the stability of $E_0^*$. First, translate $E_0^*$ to the origin, and do the affine transformation
\[
(S,R)\rightarrow\left(S+\dfrac{\beta (1-\sigma)S_0 R}{\sqrt{(\beta-\tau_{1}-\tau_{2}-\gamma-\mu)^{2}+4\beta m\mu}},R\right).
\]
Then, system \eqref{plane} can be reduced to the norm form:
\begin{equation*}
\begin{split}
\dfrac{dS}{dt} & = -\sqrt{(\beta-\tau_{1}-\tau_{2}-\gamma-\mu)^{2}+4\beta m\mu}S+\mathcal F(S,R),\\
\dfrac{dR}{dt} & =\mathcal G(S,R),
\end{split}
\end{equation*}
where
\begin{equation*}\begin{split}
\mathcal F(S,R)=&-\beta S^2+\dfrac{\beta (1-c\sigma)[\beta-\tau_{1}-\tau_{2}-\gamma-\mu-\beta(1-c+c\sigma)S_0]SR}{\sqrt{(\beta-\tau_{1}-\tau_{2}-\gamma-\mu)^{2}+4\beta m\mu}}\\
&+\dfrac{\beta^2 c(1-\sigma)(1-c\sigma)[\tau_{1}+\tau_{2}+\gamma+\mu-\beta+\beta(1+c\sigma)S_0]S_0 R^2}{(\beta-\tau_{1}-\tau_{2}-\gamma-\mu)^{2}+4\beta m\mu},\\
\end{split}
\end{equation*}
\begin{equation*}\begin{split}
\mathcal G(S,R)=
&-\beta(1-c+c\sigma)SR\\
&+\dfrac{\beta[(1-c)(\beta-\tau_{1}-\tau_{2}-\gamma-\mu)-\beta(1-c-c^2\sigma+c^2\sigma^2)S_0]R^2}{\sqrt{(\beta-\tau_{1}-\tau_{2}-\gamma-\mu)^{2}+4\beta m\mu}}
\end{split}
\end{equation*}
satisfy
\[
\mathcal F(0,0)=0,\quad \mathcal G(0,0)=0,\quad D\mathcal F(0,0)=\mathrm{O},\quad D\mathcal G(0,0)=\mathrm{O}.
\]
It follows from the center manifold theorem that there exists a locally invariant manifold  $S=\mathcal H(R)\in\mathcal{C}^{2}$, such that all the solutions on this manifold have the property:
\begin{equation*}
\dfrac{dR}{dt}=\mathcal G(\mathcal H(R),R)=h_2 R^2+h_3 R^3+O(|R|^4).
\end{equation*}
Here
\begin{equation*}\begin{split}
h_2&=\dfrac{\beta[(1-c)(\beta-\tau_{1}-\tau_{2}-\gamma-\mu)-\beta(1-c-c^2\sigma+c^2\sigma^2)S_0]}{\sqrt{(\beta-\tau_{1}-\tau_{2}-\gamma-\mu)^{2}+4\beta m\mu}},\\
h_3&=\dfrac{\beta^3 c(1-\sigma)(1-c\sigma)(1-c+c\sigma)[\beta-\tau_{1}-\tau_{2}-\gamma-\mu-\beta(1+c\sigma)S_0]S_0}{(\sqrt{(\beta-\tau_{1}-\tau_{2}-\gamma-\mu)^{2}+4\beta m\mu})^3}.
\end{split}\end{equation*}
Using
\[
S_0=\frac{(1-c)(\mathcal{R}_{rc}-1)}{(1-c+c\sigma)\mathcal{R}_{rc}},
\]
which comes from $\mathcal{R}_{ar}=1$, we get
\begin{equation*}\begin{split}\label{h}
h_2&=\dfrac{\beta[\beta(1-c)\sigma(1-\sigma)c^2(\mathcal{R}_{rc}-1)^2-m\mu(1-c+\sigma c)^2\mathcal{R}_{rc}^2]}{(1-c+\sigma c)\mathcal{R}_{rc}(\mathcal{R}_{rc}-1)\sqrt{(\beta-\tau_{1}-\tau_{2}-\gamma-\mu)^{2}+4\beta m\mu}}\\
&=\dfrac{\beta[c\sqrt{\beta(1-c)\sigma(1-\sigma)}(\mathcal{R}_{rc}-1)+(1-c+\sigma c)\sqrt{m\mu}\mathcal{R}_{rc}]f(\mathcal{R}_{rc})}{(1-c+\sigma c)\mathcal{R}_{rc}(\mathcal{R}_{rc}-1)\sqrt{(\beta-\tau_{1}-\tau_{2}-\gamma-\mu)^{2}+4\beta m\mu}},\\
\end{split}
\end{equation*}
where $f(\mathcal{R}_{rc})$ is defined in \eqref{f}. Obviously, $h_2\leq0(>0)$ iff $f(\mathcal{R}_{rc})\leq0(>0)$.

When $h_2\neq0$, i.e., $f(\mathcal{R}_{rc})\neq0$, $E_0^*$ is a saddle-node. However, we are only interested in the directions of the trajectories near $E_0^*$ in the region $\Omega$ (This is because that $E_0^*$ is a boundary equilibrium). Therefore, we get that when $f(\mathcal{R}_{rc})>0$, $E_0^*$ is unstable, when $f(\mathcal{R}_{rc})<0$, $E_0^*$ is l.a.s., and when $f(\mathcal{R}_{rc})=0$, $h_3$ is simplified as
\[
h_3=-\dfrac{\beta^4c^2\sigma(1-\sigma)(1-c+c\sigma)(1-c\sigma)^2S_0^2}{(1-c)(\sqrt{(\beta-\tau_{1}-\tau_{2}-\gamma-\mu)^{2}+4\beta m\mu})^3}<0,
\]
and thus, $E_0^*$ is a l.a.s. node.

\noindent{\bfseries{Proof of Theorem \ref{th:local}.}}
(a) When $\mathcal{R}_{ar}>1$, there are two cases.

If $c\sigma(1-\sigma)=0$, then $\mathrm{det}J(E^{*})=\beta(1-c)m\mu R^{*}/S^{*}>0$. With \eqref{det} and \eqref{delta}, we get that Jacobian matrix $J(E^{*})$ has two negative eigenvalues, i.e., the unique interior equilibrium is a node and it is l.a.s..

If $c\sigma(1-\sigma)>0$, then the unique zero of $\Phi(S)$ in $(0,a^{*})$ is $S^{*}=(-\phi_{1}-\sqrt{\Delta})/(2\phi_{2})$.
Thus
\[
\mathrm{det}J(E^{*})=\frac{\beta R^{*}(\phi_{0}-\phi_{2}S^{*2})}{S^{*}}=-\dfrac{\sqrt{\Delta}(\phi_{1}+\sqrt{\Delta})}{2\phi_{2}}\cdot\frac{\beta R^{*}}{S^{*}}=\beta R^{*}\sqrt{\Delta}>0.
\]
The same conclusion can be obtained using \eqref{det} and \eqref{delta}.

(b)(i) In this case, $\Phi(S)$ has two zeros $S_{1}^{*}=(-\phi_{1}-\sqrt{\Delta})/(2\phi_{2})$ and
$S_{2}^{*}=(-\phi_{1}+\sqrt{\Delta})/(2\phi_{2})$. Denote the corresponding interior equilibria by $E_1^*(S_1^*,R_1^*)$ and  $E_2^*(S_2^*,R_2^*)$, respectively. Thus,
\[
\mathrm{det}J(E_1^{*})=\frac{\beta R_1^{*}(\phi_{0}-\phi_{2}S_1^{*2})}{S_1^{*}}=\beta R_1^{*}\sqrt{\Delta}>0,
\]
and
\[
\mathrm{det}J(E_2^{*})=\frac{\beta R_2^{*}(\phi_{0}-\phi_{2}S_2^{*2})}{S_2^{*}}=-\beta R_2^{*}\sqrt{\Delta}<0.
\]
We have $E_{1}^{*}$ is a l.a.s. node, and $E_{2}^{*}$ is an unstable saddle by \eqref{det} and \eqref{delta}.

(b)(ii) Here $\Phi(S)$ has a zero $S^{*}=-\phi_{1}/(2\phi_{2})$ in $(0,a^{*})$ with multiplicity 2. Further calculus shows that
\[
\mathrm{det}J(E^{*})=\dfrac{\beta R^{*}(4\phi_{0}\phi_{2}-\phi_{1}^{2})}{4\phi_{2}S^{*}}=0.
\]
Therefore, $\mathrm{det}J(E^{*})$ has an eigenvalue equal to zero.
We will determine the stability by using the center manifold theory
to system \eqref{plane}. Translate $E^{*}$ to the origin and take
the affine transformation
\[
(S,R)\rightarrow\left(\dfrac{(1-c+c\sigma)S+(1-c)R}{(1-c\sigma)(1-c+c\sigma)S^{*}+(1-c)^{2}R^{*}},\dfrac{-(1-c)R^{*}S+(1-c\sigma)S^{*}R}{(1-c\sigma)(1-c+c\sigma)S^{*}+(1-c)^{2}R^{*}}\right).
\]
Then system \eqref{plane} is reduced to the norm form:
\begin{equation*}
\begin{split}
\dfrac{dS}{dt} & = (-\beta S^{*}-m\mu/S^{*}-\beta(1-c)R^{*})S+F(S,R),\\
\dfrac{dR}{dt} & = G(S,R),
\end{split}
\end{equation*}
where $F(S,R)$ and $G(S,R)$ are homogenous quadratic polynomials
and thus they satisfy
\[
F(0,0)=0,\quad G(0,0)=0,\quad DF(0,0)=\mathrm{O},\quad DG(0,0)=\mathrm{O}.
\]
It follows from the center manifold theorem \cite{ZDHD} that there
exists a locally invariant manifold $S=h(R)\in\mathcal{C}^{2}$ such
that all the solutions on this center manifold have the property
\begin{equation}
\dfrac{dR}{dt}=-\dfrac{\beta\sigma(1-\sigma)c^{2}(1-c)^{2}R^{*}}{(1-c\sigma)(1-c+c\sigma)S^{*}+(1-c)^{2}R^{*}}R^{2}+O(|E|^{3}).\label{d}
\end{equation}
Since the coefficient of $R^{2}$ in \eqref{d} is negative, we have
$E^{*}$ is a saddle-node and it is unstable.

(b)(iii) In this case $\Delta=(\phi_1+2\phi_2a^*)^2>0$ and $\Phi(S)$ has a unique zero $S^{*}=(-\phi_{1}-\sqrt{\Delta})/(2\phi_{2})$ in $(0,a^{*})$. Hence
\[
\mathrm{det}J(E^{*})=\frac{\beta R^{*}(\phi_{0}-\phi_{2}S^{*2})}{S^{*}}=-\dfrac{\sqrt{\Delta}(\phi_{1}+\sqrt{\Delta})}{2\phi_{2}}\cdot\frac{\beta R^{*}}{S^{*}}=\beta R^{*}\sqrt{\Delta}>0.
\]
In the same way, the unique interior equilibrium is a l.a.s. node.

\noindent{\bfseries{Global dynamics of system \eqref{plane} for the case $m=0$.}}

First, we give the existence theorem of equilibria.
\begin{theorem}\label{Th:S0} Let $\mathcal{R}_{sc}$ and $\mathcal{R}_{rc}$ be the control reproduction numbers defined in \eqref{Rsc} and \eqref{Rrc}, respectively.
\begin{itemize}
    \item[(a)] When $\mathcal{R}_{sc}\leq1$ and $\mathcal{R}_{rc}\leq1$, system \eqref{plane} only has one boundary equilibrium $O(0,0)$, which is the unique disease-free equilibrium;
    \item[(b)] when $\mathcal{R}_{sc}>1$ and $\mathcal{R}_{rc}\leq1$, system \eqref{plane} has a disease-free equilibrium $O(0,0)$ and a resistance-free equilibrium $E_0(S_0,0)$;
    \item[(c)] when $\mathcal{R}_{sc}\leq1$ and $\mathcal{R}_{rc}>1$, system \eqref{plane} has a disease-free equilibrium $O(0,0)$ and a resistant equilibrium $U_0(0,R_0)$;
    \item[(d)] when $\mathcal{R}_{sc}>1$ and $\mathcal{R}_{rc}>1$, system \eqref{plane} always has a disease-free equilibrium $O(0,0)$, a resistance-free equilibrium $E_0(S_0,0)$, and a resistant equilibrium $U_0(0,R_0)$, in addition, \begin{itemize}
     \item[(i)] if $c\sigma(1-\sigma)>0$, $(1-c+c\sigma)\mathcal{R}_{rc}/(1-c+c\sigma\mathcal{R}_{rc})<\mathcal{R}_{sc}<\mathcal{R}_{rc}/(1-c\sigma+c\sigma\mathcal{R}_{rc})$, then system \eqref{plane} has a unique interior equilibrium $E_1(S_1,R_1)$,
     \item[(ii)] if $c\sigma=0$ and $\mathcal{R}_{sc}=\mathcal{R}_{rc}$, then system \eqref{plane} has a singular line $S_0-R-S=0$; and if $\sigma=1$ and $\mathcal{R}_{sc}=\mathcal{R}_{rc}/(1-c+c\mathcal{R}_{rc})$, then system \eqref{plane} has a singular line $S_0-(1-c)R-S=0$,
     \end{itemize}
     where
     \begin{equation}\begin{split}\label{S0S1}
     S_0&=1-\dfrac{1}{\mathcal{R}_{sc}},\quad \,\,\, R_0=1-\dfrac{1}{\mathcal{R}_{rc}},\\
     S_1&=\dfrac{(1-c)(-S_0+(1-c\sigma)R_0)}{c^2\sigma(1-\sigma)},\\
     R_1&=\dfrac{(1-c+c\sigma)S_0-(1-c)R_0}{c^2\sigma(1-\sigma)}.
     \end{split}
     \end{equation}
\end{itemize}
\end{theorem}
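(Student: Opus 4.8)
The plan is to separate the equilibria into those lying on the coordinate axes and the genuinely interior ones, exploiting that for $m=0$ both the $S$-axis and the $R$-axis become invariant lines of system \eqref{plane} (as already observed in the proof of Proposition \ref{invar_set}). First I would locate the boundary equilibria. Setting $R=0$, the equation $U(S,0)=0$ factors as $S[\beta(1-S)-(\tau_1+\tau_2+\gamma+\mu)]=0$, so $S=0$ or $S=S_0=1-1/\mathcal R_{sc}$; setting $S=0$, the equation $V(0,R)=0$ factors as $R[\beta(1-c)(1-R)-(\tau_2+\gamma+\mu)]=0$, so $R=0$ or $R=R_0=1-1/\mathcal R_{rc}$. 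Since $S_0>0\Leftrightarrow\mathcal R_{sc}>1$ and $R_0>0\Leftrightarrow\mathcal R_{rc}>1$, simply reading off which of $E_0(S_0,0)$ and $U_0(0,R_0)$ are biologically admissible settles cases (a)--(c) at once, together with the list of boundary equilibria in (d).

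For the interior equilibria I would use $S>0$ and $R>0$ to divide $U(S,R)=0$ by $S$ and $V(S,R)=0$ by $R$; both nullclines then collapse to straight lines, namely
\begin{equation*}
S+(1-c\sigma)R=S_0,\qquad \frac{1-c+c\sigma}{1-c}\,S+R=R_0.
\end{equation*}
An interior equilibrium is exactly a solution of this linear system lying in the open triangle. The decisive quantity is the coefficient determinant, which a short expansion shows equals $-c^2\sigma(1-\sigma)/(1-c)$; its vanishing is governed precisely by the factor $c\sigma(1-\sigma)$, and this dichotomy organizes the rest of the argument.

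When $c\sigma(1-\sigma)>0$ the determinant is nonzero, so Cramer's rule produces a unique candidate $(S_1,R_1)$, and I would check that it reproduces the closed forms in \eqref{S0S1}. The substantive step is then to convert the admissibility requirement $S_1>0$, $R_1>0$ into the stated window for $\mathcal R_{sc}$. Using $S_0=(\mathcal R_{sc}-1)/\mathcal R_{sc}$ and $R_0=(\mathcal R_{rc}-1)/\mathcal R_{rc}$, the inequality $S_1>0$ is equivalent to $(1-c\sigma)R_0>S_0$, which after cross-multiplication becomes $\mathcal R_{sc}<\mathcal R_{rc}/(1-c\sigma+c\sigma\mathcal R_{rc})$, while $R_1>0$ is equivalent to $(1-c+c\sigma)S_0>(1-c)R_0$, i.e. $\mathcal R_{sc}>(1-c+c\sigma)\mathcal R_{rc}/(1-c+c\sigma\mathcal R_{rc})$; together these give the double inequality of (d)(i). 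It remains to confirm that such a positive equilibrium actually lies inside $\Omega$, i.e. $S_1+R_1<1$; this I would obtain for free from the $X$-equation of \eqref{model}, since $\gamma>0$ forces $X=1-S-R>0$ at any equilibrium with $(S,R)\neq(0,0)$.

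When $c\sigma(1-\sigma)=0$ the two nullcline lines are parallel, so there is either no interior equilibrium or a whole line of them, the latter occurring exactly when the lines coincide, which forces their right-hand constants to be compatible. For $c\sigma=0$ both lines reduce to $S+R=S_0$ and $S+R=R_0$, coinciding iff $S_0=R_0$, i.e. $\mathcal R_{sc}=\mathcal R_{rc}$, giving the singular line $S_0-R-S=0$; for $\sigma=1$ they reduce to $S+(1-c)R=S_0$ and $S+(1-c)R=(1-c)R_0$, coinciding iff $S_0=(1-c)R_0$, i.e. $\mathcal R_{sc}=\mathcal R_{rc}/(1-c+c\mathcal R_{rc})$, giving $S_0-(1-c)R-S=0$. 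This settles (d)(ii). I expect the main obstacle to be purely algebraic: carrying out the cross-multiplications that turn the sign conditions $S_1>0,R_1>0$ and the coincidence conditions into clean threshold relations among $\mathcal R_{sc}$ and $\mathcal R_{rc}$, and keeping careful track of the degenerate sub-cases ($c=0$, $\sigma=0$, $\sigma=1$) so that none is double counted or omitted.
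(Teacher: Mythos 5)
Your proposal is correct, and there is in fact no proof in the paper to compare it against: Theorem \ref{Th:S0} is stated in the Appendix without any argument, so what you wrote supplies a missing proof rather than paralleling an existing one. Your computations check out. With $m=0$ both axes are invariant, the boundary equilibria are exactly $O(0,0)$, $E_0(S_0,0)$ (admissible iff $\mathcal{R}_{sc}>1$) and $U_0(0,R_0)$ (admissible iff $\mathcal{R}_{rc}>1$); the interior nullclines are the lines $S+(1-c\sigma)R=S_0$ and $\frac{1-c+c\sigma}{1-c}S+R=R_0$, whose coefficient determinant is indeed $-c^{2}\sigma(1-\sigma)/(1-c)$; Cramer's rule reproduces $(S_1,R_1)$ of \eqref{S0S1}; and the sign conditions $S_1>0$, $R_1>0$ cross-multiply exactly to the window in (d)(i). (These equivalences also show---and you should state it explicitly, since the global-dynamics theorem needs it---that the window is nonempty only if $\mathcal{R}_{rc}>1$, in which case its lower endpoint exceeds $1$; hence no interior equilibria occur in cases (a)--(c).) The coincidence conditions for the parallel nullclines give precisely the two singular lines of (d)(ii). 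Your trick for getting $S_1+R_1<1$ for free from the third equation of \eqref{model} is valid: at any equilibrium of \eqref{plane} with $(S,R)\neq(0,0)$ one has $X=\bigl[\mu+(\tau_1+\tau_2+\gamma)S+(\tau_2+\gamma)R\bigr]/\bigl[\beta S+\beta(1-c)R+\mu\bigr]>0$, where either $\mu>0$ alone or $\gamma>0$ together with $(S,R)\neq(0,0)$ makes the numerator positive. For comparison, the route suggested by the paper's own $m>0$ machinery would be to set $m=0$ in \eqref{equ:S}: then $\phi_0=0$ and $\Phi(S)=S(\phi_2 S+\phi_1)$ factors, the root $S=0$ giving $U_0$ via \eqref{R*} and the root $-\phi_1/\phi_2$ giving $E_1$ (one can check $-\phi_1/\phi_2=S_1$). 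Your symmetric linear-system treatment is equivalent but arguably cleaner: it exhibits $S_1$ and $R_1$ on equal footing and makes the degenerate cases $c\sigma(1-\sigma)=0$---where the quadratic degenerates and the singular lines appear---completely transparent.
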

Next, we determine the stability of the equilibria described in Theorem \ref{Th:S0}.
\begin{theorem}\label{Th:local}
Consider the equilibria described in Theorem \ref{Th:S0}, the following results hold.
\begin{itemize}
    \item[(a)] When $\mathcal{R}_{sc}\leq1$ and $\mathcal{R}_{rc}\leq1$, the disease-free equilibrium $O(0,0)$ is locally asymptotically stable (l.a.s.);
    \item[(b)] when $\mathcal{R}_{sc}>1$ and $\mathcal{R}_{rc}\leq1$, the disease-free equilibrium $O(0,0)$ is unstable, and the resistance-free equilibrium $E_0(S_0,0)$ is l.a.s.;
    \item[(c)] when $\mathcal{R}_{sc}\leq1$ and $\mathcal{R}_{rc}>1$, the disease-free equilibrium $O(0,0)$ is unstable, and the resistant equilibrium $U_0(0,R_0)$ is l.a.s.;
    \item[(d)] when $\mathcal{R}_{sc}>1$ and $\mathcal{R}_{rc}>1$, the disease-free equilibrium $O(0,0)$ is unstable,
    \begin{itemize}
    \item[(i)] if $c\sigma(1-\sigma)>0$ and $(1-c+c\sigma)\mathcal{R}_{rc}/(1-c+c\sigma\mathcal{R}_{rc})<\mathcal{R}_{sc}<\mathcal{R}_{rc}/(1-c\sigma+c\sigma\mathcal{R}_{rc})$, then both the resistance-free equilibrium $E_0(S_0,0)$ and the resistant equilibrium $U_0(0,R_0)$ are l.a.s., and the  unique interior equilibrium $E_1(S_1,R_1)$ is unstable,
     \item[(ii)] if $c\sigma=0$ and $\mathcal{R}_{sc}=\mathcal{R}_{rc}$, then all the solution orbits of system \eqref{plane} converge to the singular line $S_0-R-S=0$, and if $\sigma=1$ and $\mathcal{R}_{sc}=\mathcal{R}_{rc}/(1-c+c\mathcal{R}_{rc})$, then all the solution orbits of system \eqref{plane} converge to the singular line $S_0-(1-c)R-S=0$,
     \item[(iii)] if $\mathcal{R}_{sc}<(1-c+c\sigma)\mathcal{R}_{rc}/(1-c+c\sigma\mathcal{R}_{rc})$, then the resistance-free equilibrium $E_0(S_0,0)$ is unstable, and the resistant equilibrium $U_0(0,R_0)$ is l.a.s.,
     \item[(iv)] if $\mathcal{R}_{sc}>\mathcal{R}_{rc}/(1-c\sigma+c\sigma\mathcal{R}_{rc})$, then the resistance-free equilibrium $E_0(S_0,0)$ is l.a.s., and the resistant equilibrium $U_0(0,R_0)$ is unstable,
     \item[(v)] if $c\sigma(1-\sigma)>0$ and $\mathcal{R}_{sc}=(1-c+c\sigma)\mathcal{R}_{rc}/(1-c+c\sigma\mathcal{R}_{rc})$, then the resistance-free equilibrium $E_0(S_0,0)$ is unstable and the resistant equilibrium $U_0(0,R_0)$ is l.a.s.,
     \item[(vi)] if $c\sigma(1-\sigma)>0$ and $\mathcal{R}_{sc}=\mathcal{R}_{rc}/(1-c\sigma+c\sigma\mathcal{R}_{rc})$, then the resistance-free equilibrium $E(S_0,0)$ is l.a.s. and the resistant equilibrium $U_0(0,R_0)$ is unstable,
     \end{itemize}
     where $S_0,R_0,S_1$ and $R_1$ are given by \eqref{S0S1}.
     \end{itemize}
\end{theorem}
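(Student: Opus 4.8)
The plan is to read off local stability equilibrium by equilibrium from the Jacobian of system \eqref{plane} with $m=0$, exploiting the fact that three of the four equilibria lie on an invariant axis where the Jacobian becomes triangular. Writing $U=S\,[\beta(1-S-R)-(\tau_{1}+\tau_{2}+\gamma+\mu)+\sigma\beta cR]$ and $V=R\,[\beta(1-c)(1-S-R)-(\tau_{2}+\gamma+\mu)-\sigma\beta cS]$, I would first record $U_R=-\beta(1-\sigma c)S$ and $V_S=-\beta(1-c+\sigma c)R$, so that $U_R$ vanishes on the $R$-axis and $V_S$ on the $S$-axis. This makes the linearizations at the three boundary equilibria triangular, and their eigenvalues are then read off the diagonal.

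At $O(0,0)$ the Jacobian is diagonal with eigenvalues $(\tau_{1}+\tau_{2}+\gamma+\mu)(\mathcal{R}_{sc}-1)$ and $(\tau_{2}+\gamma+\mu)(\mathcal{R}_{rc}-1)$, which settles part (a) and the instability of $O$ in (b)--(d). At $E_{0}(S_{0},0)$ the matrix is upper triangular: one eigenvalue is $U_S(S_0,0)=-\beta S_0<0$, while the transverse eigenvalue $V_R(S_0,0)=\beta[(1-c+\sigma c)/\mathcal{R}_{sc}-(1-c)/\mathcal{R}_{rc}-\sigma c]$ changes sign exactly across $\mathcal{R}_{sc}=(1-c+c\sigma)\mathcal{R}_{rc}/(1-c+c\sigma\mathcal{R}_{rc})$. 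Symmetrically, at $U_{0}(0,R_{0})$ the matrix is lower triangular with $V_R(0,R_0)=-\beta(1-c)R_0<0$ and transverse eigenvalue $U_S(0,R_0)=\beta[(1-\sigma c)/\mathcal{R}_{rc}+\sigma c-1/\mathcal{R}_{sc}]$ changing sign across $\mathcal{R}_{sc}=\mathcal{R}_{rc}/(1-c\sigma+c\sigma\mathcal{R}_{rc})$. These two sign computations give the stability of $E_0$ and $U_0$ in (b), (c), (d)(iii) and (d)(iv); in the intermediate range of (d)(i) both transverse eigenvalues are negative, producing the bistable configuration. For the interior equilibrium $E_1$ I would use the simplified Jacobian of the main text: with $m=0$ one has $\phi_0=0$, hence $\operatorname{tr}J(E_1)=-\beta S_1-\beta(1-c)R_1<0$ and $\det J(E_1)=-\beta c^{2}\sigma(1-\sigma)S_1R_1<0$, so a negative determinant makes $E_1$ a saddle, completing (d)(i).

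The remaining work is the degenerate thresholds (d)(v)--(vi) and the singular-line cases (d)(ii), and I expect these to be the main obstacle. At the lower (resp. upper) threshold the transverse eigenvalue at $E_0$ (resp. $U_0$) vanishes, so I would follow the scheme used in the proof of Theorem \ref{th:E0}: translate the degenerate equilibrium to the origin, apply an affine change of coordinates that triangularizes the linear part, invoke the center manifold theorem to obtain a reduced one-dimensional equation $\dot R=h_2R^{2}+O(|R|^{3})$, and determine the flow within $\Omega$ from the sign of the leading coefficient, exactly as in Theorem \ref{th:local}(b)(ii); the saddle-node reduction restricted to the feasible side of the axis is what yields the asymmetric conclusions ($E_0$ unstable and $U_0$ l.a.s. in (v); the reverse in (vi)). For (d)(ii) the hypotheses $c\sigma=0$ with $\mathcal{R}_{sc}=\mathcal{R}_{rc}$ (or $\sigma=1$ with the stated identity) force $\phi_2=0$ and make the two nontrivial nullclines coincide, so the indicated line is a continuum of equilibria joining $E_0$ and $U_0$; here I would verify that the eigenvalue transverse to the line is strictly negative along it, and combine this with the Dulac function $B=1/(SR)$ of Proposition \ref{global} (which excludes closed orbits) to conclude, via the Poincar\'{e}--Bendixson theorem, that every interior orbit converges to the singular line.
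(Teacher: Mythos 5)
You should note at the outset that the paper never actually proves Theorem \ref{Th:local}: it is stated in the Appendix without proof and used only to assemble the phase portraits in Figure \ref{Fig.9}. So there is no in-paper argument to compare against, and your route --- triangular Jacobians at the three axis equilibria, the simplified interior Jacobian for $E_1$, center-manifold reductions at the degenerate thresholds, and Dulac plus Poincar\'e--Bendixson for the singular lines --- is the natural one; it is exactly the machinery the paper deploys in its $m>0$ proofs of Theorems \ref{th:E0} and \ref{th:local}. Your eigenvalue formulas at $O$, $E_0(S_0,0)$ and $U_0(0,R_0)$ are correct, the two threshold identifications match the statement, and the steps you defer do go through: with $m=0$ one gets $\det J(E_1)=-\beta^{2}c^{2}\sigma(1-\sigma)S_1R_1<0$ (your expression is missing a factor $\beta$, inherited from the inconsistency already present in \eqref{det}, but the sign is unaffected), and the center-manifold reductions give $\dot R=\beta c^{2}\sigma(1-\sigma)R^{2}+O(R^{3})$ at $E_0$ in case (d)(v) and $\dot S=\beta c^{2}\sigma(1-\sigma)(1-c)^{-1}S^{2}+O(S^{3})$ at $U_0$ in case (d)(vi), both with positive leading coefficient along the center direction pointing into $\Omega$, so each repels on the feasible side, as you predict.

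Two genuine gaps remain. First, part (a) is \emph{not} settled by the eigenvalues of $J(O)$: the hypothesis is $\mathcal{R}_{sc}\leq1$, $\mathcal{R}_{rc}\leq1$, and in the boundary cases $\mathcal{R}_{sc}=1$ or $\mathcal{R}_{rc}=1$ one (or both) of the eigenvalues vanishes, so linearization is inconclusive --- precisely the degeneracy you treat carefully in (d)(v)--(vi) but overlook here. A fix in the same spirit: the axes are invariant, and on the $S$-axis with $\mathcal{R}_{sc}=1$ the flow is $\dot S=-\beta S^{2}$ (similarly $\dot R=-\beta(1-c)R^{2}$ on the $R$-axis when $\mathcal{R}_{rc}=1$); alternatively, in case (a) one has $U\leq-\beta S[S+(1-\sigma c)R]\leq0$ and $V\leq-\beta R[(1-c)(S+R)+\sigma cS]\leq0$ on $\Omega$, so both coordinates decrease and $O$ is l.a.s.\ relative to $\Omega$ (as in the paper's proof of Theorem \ref{th:E0}, stability here must be read relative to $\Omega$). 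Second, your conclusions in (d)(iii)--(vi), and the nonemptiness of the bistability window in (d)(i), silently use the ordering
\begin{equation*}
\frac{(1-c+c\sigma)\mathcal{R}_{rc}}{1-c+c\sigma\mathcal{R}_{rc}}\;\leq\;\frac{\mathcal{R}_{rc}}{1-c\sigma+c\sigma\mathcal{R}_{rc}}\qquad(\mathcal{R}_{rc}>1),
\end{equation*}
with strict inequality precisely when $c\sigma(1-\sigma)>0$: for instance, in (iii) you infer that $U_0$ is l.a.s.\ from $\mathcal{R}_{sc}$ lying below the \emph{lower} threshold, which requires knowing it then also lies below the upper one, and in (v)--(vi) sitting at one threshold must place $\mathcal{R}_{sc}$ strictly on the stable side of the other. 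The ordering is true (the difference of the two sides has the sign of $c^{2}\sigma(1-\sigma)(\mathcal{R}_{rc}-1)$), but it must be stated and proved. A smaller point on (d)(ii): Poincar\'e--Bendixson also allows $\omega$-limit sets that are graphics, here a connecting orbit closed up by a segment of the equilibrium line; these are excluded by the same Green's-theorem computation behind Proposition \ref{global} (the line integral of $B(U\,dR-V\,dS)$ vanishes both along orbits and along equilibria), and you should also observe that $O$, being a repeller when $\mathcal{R}_{sc},\mathcal{R}_{rc}>1$, cannot belong to the $\omega$-limit set of any other orbit, so every interior orbit indeed limits onto the singular line.
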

By Proposition \ref{invar_set}, Theorems \ref{Th:S0}, \ref{Th:local} and Proposition \ref{global}, we obtain the global dynamics of system \eqref{plane} for the case $m=0$, see Figure \ref{Fig.9}. The phase portraits correspond to the cases of Theorem \ref{Th:local}.

\begin{figure}[h]
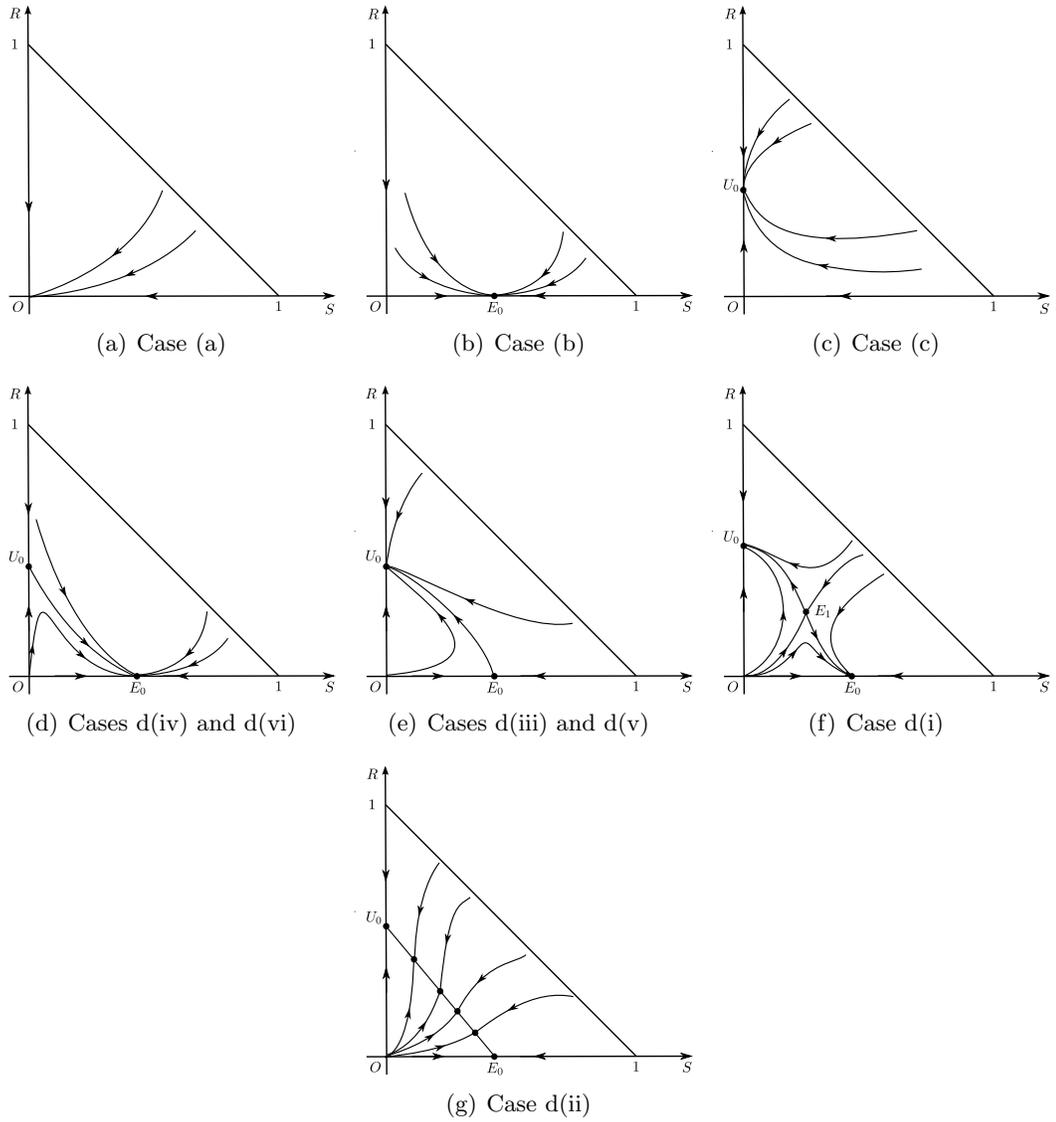

\centering
\subfigure[Case (a)]{
\label{Fig9.sub.1}
\includegraphics[width=0.31\textwidth]{fig8a}}
\subfigure[Case (b)]{
\label{Fig9.sub.2}
\includegraphics[width=0.31\textwidth]{fig8b}}
\subfigure[Case (c)]{
\label{Fig9.sub.3}
\includegraphics[width=0.31\textwidth]{fig8c}}
\subfigure[Cases d(iv) and d(vi)]{
\label{Fig9.sub.4}
\includegraphics[width=0.31\textwidth]{fig8d}}
\subfigure[Cases d(iii) and d(v)]{
\label{Fig9.sub.5}
\includegraphics[width=0.31\textwidth]{fig8e}}
\subfigure[Case d(i)]{
\label{Fig9.sub.6}
\includegraphics[width=0.31\textwidth]{fig8f}}
\subfigure[Case d(ii)]{
\label{Fig9.sub.7}
\includegraphics[width=0.31\textwidth]{fig8g}}
\caption{The phase portraits of system \eqref{plane} when $m=0$. }
\label{Fig.9}
\end{figure}
\end{document}